\renewcommand{\baselinestretch}{1.5}
\newcommand\semihuge{\@setfontsize\semihuge{21.5}{22}}
\newcommand{\algmargin}{\the\ALG@thistlm}
\algnewcommand{\parState}[1]{\State%
  \parbox[t]{\dimexpr\linewidth-\algmargin}{\strut #1\strut}}
\newtheorem{theorem}{\bf Theorem}
\newtheorem{definition}{\bf Definition}
\begin{document}
%\pagenumbering{gobble}% Remove page numbers (and reset to 1)
\clearpage
%\maketitle
%Title.
% ------
\title{\semihuge Evolutionary Games for Correlation-Aware Clustering in Massive Machine-to-Machine Networks \vspace{-0.3cm}}
% ----------------------------------------------------
\author{Nicole~Sawyer,~\IEEEmembership{Student Member,~IEEE,},~Mehdi Naderi Soorki,Walid Saad~\IEEEmembership{Senior Member,~IEEE,}
        and~David~B.~Smith,~\IEEEmembership{Member,~IEEE.}% <-this % stops a space

\thanks{N. Sawyer and D. B. Smith are with the Research School of Engineering, The Australian National University, Canberra ACT 0200, Australia.  The authors are also with CSIRO Data61, Canberra ACT 2601, Australia, e-mail:{nicole.sawyer,david.smith}@data61.csiro.au.}% <-this % stops a space
\thanks{M. Naderi Soorki and W. Saad are with the department of Electrical and Computer Engineering, Virginia Tech, Blacksburg, VA, USA, e-mail: {mehdin,walids}@vt.edu.}
\thanks{This research is supported by an Australian Government Research Training Program (RTP) Scholarship, by the U.S. Office of Naval Research (ONR) under Grant N00014-15-1-2709.}
\thanks{A preliminary version of this work appeared in IEEE GLOBECOM 2017 \cite{sawyer2017evoluationary}.}}% <-this % stops a space

%-----------------------------------------------------    lkfffjj kkkkkkkk
\maketitle\vspace{-0.8cm}
\vspace{-0.7cm}
\thispagestyle{empty}
\begin{abstract}\vspace{-0.4cm}	
In this paper, the problem of self-organizing, correlation-aware clustering is studied for a dense network of machine-type devices (MTDs) deployed over a cellular network.  In dense machine-to-machine networks, MTDs are typically located within close proximity and will gather correlated data, and, thus, clustering MTDs based on data correlation will lead to a decrease in the number of redundant bits transmitted to the base station.  To analyze this clustering problem, a novel utility function that captures the average MTD transmission power per cluster is derived, as a function of the MTD location, cluster size, and inter-cluster interference.  Then, the clustering problem is formulated as an evolutionary game, which models the interactions among the massive number of MTDs, in order to decrease MTD transmission power.  To solve this game, a distributed algorithm is proposed to allow the infinite number of MTDs to autonomously form clusters.  It is shown that the proposed distributed algorithm converges to an evolutionary stable strategy (ESS), that is robust to a small portion of MTDs deviating from the stable cluster formation at convergence.  The maximum fraction of MTDs that can deviate from the ESS, while still maintaining a stable cluster formation is derived.  Simulation results show that the proposed approach can effectively cluster MTDs with highly correlated data, which, in turn, enables those MTDs to eliminate a large number of redundant bits.  The results show that, on average, using the proposed approach yields reductions of up to $23.4$\% and $9.6$\% in terms of the transmit power per cluster, compared to forming clusters with the maximum possible size and uniformly selecting a cluster size, respectively.
\end{abstract}
\vspace{0.15cm}
{\small \emph{Index Terms}--- Data-centric clustering; Data correlation; Evolutionary stable strategy; M2M communications.}

%%%%%%%%%%%%%%%%%%%%%%
%%%%%%%%%%%%%%%%%%%%%%
\section{Introduction}
Machine-to-machine (M2M) communications is an important component of the emerging Internet of Things (IoT) system, as it enables advanced networked applications such as smart home technologies, smart grid, healthcare, drone systems, manufacturing systems, and surveillance~\cite{chen2014machine,zheng2012radio,verma2016machine,mozaffari2016unmanned,chen2017caching,park2016learning}.  Within an M2M network, a massive number of machine-type devices (MTDs) will be densely deployed over wireless cellular networks~\cite{dawy2017toward}.  An MTD can be a sensor, actuator, or smart meter whose typical role is to sense or measure an environment, and transmit the collected data to cellular base stations (BSs).  MTDs enable real-time monitoring and control of any physical environment, without direct human involvement, thus making processes more efficient and improving human welfare~\cite{verma2016machine,lien2011toward}.  Since the number of MTDs is expected to be massive and much larger than the number of cellular-type devices (CTDs), it is expected that a massive-scale MTD deployment will lead to quality-of-service (QoS) degradation, increased traffic and signaling, increased latency, and reduced reliability~\cite{verma2016machine}.  Therefore, deploying MTDs over cellular networks faces many challenges ranging from network modeling to resource management, massive-scale, access, and MTD clustering as mentioned in~\cite{chen2014machine,dawy2017toward,verma2016machine}, and~\cite{abuzainab2017cognitive}.

Recently, the idea of clustering MTDs into smaller groups has emerged as a promising technique to reduce the traffic load on the cellular BS and improve spatial reuse and energy efficiency, while reducing interference in the network, as studied in~\cite{juan2013data,lien2011toward,zheng2012radio}, and~\cite{song2015correlation}.  Existing clustering techniques for M2M communications~\cite{miao20162,lee2013feasibility,soorki2017stochastic,ho2012energy,al2016self,wei2012joint,hussain2014multi,safdar2013resource,lien2011toward,juan2013data,song2015correlation,sawyer2017evoluationary,bayat2014distributed} have focused on clustering MTDs based on resource allocation, location, load on the random access channel (RACH), and data correlation.  Clustering has been considered in literature, as an effective approach to alleviate the potential massive congestion caused by MTDs, as done in~\cite{miao20162,lee2013feasibility,soorki2017stochastic} and~\cite{ho2012energy}.  These aforementioned works aim to maximize the number of MTDs that attempt to simultaneously access the BS, while minimizing network congestion, the load on the RACH and signaling overhead~\cite{lee2013feasibility,soorki2017stochastic}.  In~\cite{miao20162}, an energy-efficient cluster formation (load adaptive multiple access scheme) and cluster head selection scheme was proposed, to maximize network lifetime in a massive M2M network.  In~\cite{lee2013feasibility}, a cognitive M2M communication system for a large number of MTDs was considered, in order to reduce congestion on the RACH, by reducing the number of MTDs directly accessing the BS.  Furthermore, the work in~\cite{soorki2017stochastic}, investigated the problem of random access contention between cooperative groups of MTDs that coordinate their random access channel, while taking into account energy consumption and time varying queue length.  On the other hand, clustering techniques based on the QoS requirements and locations of MTDs, are proposed in~\cite{lien2011toward},~\cite{al2016self,wei2012joint,hussain2014multi,safdar2013resource} and~\cite{bayat2014distributed}, in order to maximize the number of supported MTDs.  A cluster prioritization scheme for massive access management is studied in~\cite{lien2011toward}, where MTDs are clustered based on QoS requirements.  In~\cite{abuzainab2017cognitive}, a distributed resource (time) allocation scheme was proposed, to address the diverse QoS requirements in an IoT network, while taking into account data rate of CTDs and energy consumption of MTDs.  The work in~\cite{al2016self} proposes a self-organized cluster formation mechanism in which MTDs form clusters with neighboring MTDs.  Additionally, the work in~\cite{wei2012joint} considers cluster formation based on location to the designated cluster head and a QoS threshold, while minimizing interference in the M2M network.  A number of works such as in~\cite{wei2012joint,hussain2014multi}, and~\cite{safdar2013resource} have also considered joint clustering and resource allocation (such as power control).  The goal of these works is to maximize MTD data rate and minimize MTD transmit power, in order to reduce interference to the cellular network and other MTDs, as well as prolonging MTD battery lifetime.  The aforementioned works consider a machine-centric clustering approach, that clusters MTDs in order to maximize MTD data rate and the number of supported MTDs in the network, while minimizing energy consumption.  Meanwhile, in~\cite{saad2009distributed}, the authors proposed a game-theoretic clustering algorithm that optimizes the tradeoff between sum-rate gains and power costs.

Due to the dense deployment of MTDs in M2M networks, MTDs within close proximity will gather correlated data, thus often sending the same information (redundant bits) to the BS~\cite{hsieh2015not,chang2012not,song2015correlation}.  Hence, a data-centric clustering approach can be used to improve the data quality sent to the BS. Existing work on clustering MTDs with respect to data correlation remains limited.  Primarily, the works in~\cite{juan2013data} and~\cite{song2015correlation} have studied the possibility of MTD clustering based on location and correlation, however, these works rely on centralized approaches that are not practical for large-scale M2M networks.  Such centralized clustering approaches can lead to significant signaling overhead as they require gathering of global information, such as location and data correlation factors, for a large number of MTDs.  Moreover, in practice, centralized clustering approaches are not robust to small changes in the MTD networking environment, such as the joining of new MTDs, MTD loss of battery, or rapid fluctuations in the sensing environment.  Meanwhile, the work in~\cite{hamidouche2017popular} proposed a distributed correlation-aware cell association algorithm, that maximizes the information sent to the small BS, as well as maximizing the number of assigned IoT devices to every small BS.  Even though some of the proposed works, such as,~\cite{soorki2017stochastic} and~\cite{hamidouche2017popular}, consider a distributed approach, they cannot be applied to massive M2M networks.  This is due to the fact that previous works such as~\cite{soorki2017stochastic} and~\cite{hamidouche2017popular} do not consider a massive number of MTDs, as this would cause the aforementioned distributed algorithms to significantly increase in complexity.  Indeed, existing works in~\cite{miao20162,lee2013feasibility,soorki2017stochastic,ho2012energy,wei2012joint,hussain2014multi,al2016self, safdar2013resource,lien2011toward,juan2013data,song2015correlation,bayat2014distributed}, and~\cite{hamidouche2017popular} consider clustering a small, finite number of MTDs, which is not the case for practical IoT scenarios in which the number of MTDs within the network is massive and nearly infinite.  Additionally, the aforementioned works are also not robust to the dynamics of a large-scale M2M network due to the factors such as the arrival or departure of new MTDs or the deactivation of MTDs (e.g., due to battery loss).  Therefore, a distributed clustering approach for a massive number of MTDs is needed, while ensuring low signaling overhead and robustness for small changes in the network.

The main contribution of this paper is to address the problem of fully distributed correlation-aware clustering for an infinite number of MTDs densely deployed in a cellular network.  In particular, we formulate the MTD clustering problem as a dynamic evolutionary game.  In this game, MTDs can self-organize in a fully distributed manner to form clusters, based on data correlation and potential transmission power savings.  Then, using stochastic geometry, we accurately model and characterize the distance distributions between MTDs which, in turn, allows deriving a closed-form expression for the inter-cluster interference, i.e., the interference generated from MTDs in other clusters to a cluster head.  Based on the distance distributions of MTDs and inter-cluster interference, for the proposed evolutionary game, we derive a closed-form expression for the utility function per cluster, as a function of MTD distance distributions, inter-cluster interference, and cluster size.  Hence, by combining both game theory, for distributed decision making, and stochastic geometry, for deriving concrete utility functions, we analyze the evolutionary stable state (ESS) (i.e., the equilibrium) of the proposed game.  In particular, we propose a novel, distributed algorithm that enables the MTDs to autonomously form clusters and converge to an ESS cluster formation, that is robust to potential changes in the cluster membership decisions of the MTDs that can occur due to the joining of new MTDs, MTD loss of battery, or rapid fluctuations in the sensing environment.  In this regard, we also derive the maximum portion of MTDs that can deviate from the ESS, while still maintaining a stable cluster formation.  Simulation results show how the proposed dynamic evolutionary-based correlation-aware clustering algorithm converges to an ESS.  The results also show that, the proposed evolutionary algorithm can substantially decrease transmission power per cluster, and increases the number of redundant bits that can be eliminated in a given cluster, compared to a pure cluster type baseline and a uniform cluster type baseline.  In addition, the results show that, as the network density and data correlation increase, the proposed distributed algorithm determines the number of MTDs within each cluster, based on network density and cluster radius.  Moreover, the simulation results uncover a tradeoff between cluster size and maximum transmission power per cluster, given the network density and correlation constant.  In summary, the novelty of our contributions is outlined as follows:
\begin{itemize}
  \item We characterize the energy consumption and radius of an MTD cluster.  We also investigate the impacts of MTD density, data correlation of MTDs, and the inter-cluster interference, when clustering MTDs.  Additionally, based on stochastic geometry, a closed-form expression for inter-cluster interference is also determined.
  \item Based on the derived analysis results from stochastic geometry, a distributed evolutionary game-theoretic approach is proposed to cluster MTDs with correlated data, in order to decrease MTD transmission power by reducing the number of redundant bits, in a massive and locally finite M2M network.
  \item A fully distributed algorithm based on an evolutionary game is proposed to find the stable cluster formation of MTDs.  The robustness of the proposed distributed algorithm, with respect to the maximum number of MTDs that can change their cluster formation, is derived.  Moreover, the accuracy of the stochastic geometry analysis as well as the effectiveness of the game-theoretic approach are corroborated by extensive simulations.
\end{itemize}

The rest of the paper is organized as follows.  In Section~\ref{Sec:Sys-Model}, we define the system model and problem formulation for correlation-aware clustering of MTDs.  In Section~\ref{sec:EGgame}, an evolutionary game based on data correlation and transmission energy is proposed to cluster an infinite number of MTDs.  As well as using stochastic geometry to characterize distances and inter-cluster interference.  Within Section~\ref{Sec:Stable}, the proposed evolutionary game stability is analyzed.  Simulation results are illustrated in Section~\ref{Sec:Sim}.  Finally, conclusions are drawn in Section~\ref{Sec:Con}.

\section{System Model And Problem Formulation}\label{Sec:Sys-Model}
Consider the uplink of a wireless cellular network having a single BS serving an infinite number of MTDs engaged in M2M communications.  All MTDs in the set $\mathcal{M}$ are randomly deployed in a 2D space $\mathds{R}^2$, where the location of each MTD $m$ is denoted by $\boldsymbol{y}_m,~\forall m\in\mathcal{M}$, and is based on a Poisson point process (PPP) $\Phi_{M}=\{\boldsymbol{y}_m\in\mathds{R}^2|m\in\mathcal{M}\}$ with density $\lambda_m$.  We assume that MTDs gather data from a Gaussian random field, as this model can capture the worst-case scenario for the number of bits needed to encode a source field~\cite{hsieh2015not,chang2012not}.  Thus, the data source $s_m$ for each MTD $m$, is a Gaussian random variable with mean $\mu_m$ and variance $\sigma_m^2$~\cite{hsieh2015not}.  Here, we use entropy to model the information of each MTD's data.  Each MTD quantizes its continuous Gaussian data source with a sufficiently small quantization step $\Delta$.  Hence, the entropy $H_m$ of each MTD's quantized data source will be given by~\cite{hsieh2015not}:
\begin{equation}\label{entropy}
H_m=\frac{1}{2}\log_2\left(\frac{2\pi e}{\Delta^2}\sigma_m^2\right).
\end{equation}

In this network, each MTD $m$ sends its data via a cellular link to the BS.  The BS allocates orthogonal resource blocks to each MTD for the cellular link, based on an orthogonal frequency-division multiple access (OFDMA) system.  A set $\mathcal{Z}$ of $Z$ resource blocks are used for cellular link transmission, with fixed bandwidth $B$~Hz per resource block, during each time slot $t$ with a fixed duration $T$.  Each MTD $m$ is allocated one resource block $z\in\mathcal{Z}$, and a transmit power $p_m\in[0,P_{\max}]$, for cellular link transmission.  The transmission power per resource block $z\in\mathcal{A}$ that MTD $m$ requires to send $H_m$ bits over the cellular link during each time slot, is given by:
\begin{equation}\label{cellular_TxP}
  p_m(t) = \left(\frac{BN_0}{g_{m}(t)^{(z)}}\right)\left(2^{\frac{H_m}{TB}}-1\right),
\end{equation}
where $g_{m}(t)^{(z)}$ is the channel gain of MTD $m$ over the cellular link on resource block $z$; and $N_0$ is the noise power spectral density.

In the uplink of existing cellular networks, the number of available orthogonal resource blocks have been designed to suit the needs of CTDs.  Thus, the number of available resource blocks for each MTD is limited, and as a result, not all MTDs will be allocated orthogonal resources for transmission.  If MTDs are unable to send their sensed data to the BS, this may lead to missing information for particular a environment.  However, due to the dense deployment of MTDs, MTDs within close proximity will gather correlated data due to sensing the same environment~\cite{hsieh2015not}.  For example, sensors deployed to measure temperature in the same room, will measure very similar readings, which is highly correlated.  Hence, multiple MTDs will potentially send a large number of redundant bits (same information) to the BS.  Next, we introduce a \emph{correlation-aware clustering scheme} that will enable an infinite number of MTDs to cooperatively form clusters by sharing their data via M2M links.  Clustering allows a reduction in the number of MTDs accessing the cellular uplink, thus saving orthogonal resources in the uplink for CTDs.  Additionally, within each cluster, MTDs can coordinate and eliminate the redundant bits of the shared data, which means that the overall number of bits sent via the cellular link can be reduced, thus minimizing transmission energy and improving overall system efficiency.

\subsection{Problem Formulation}
The goal of clustering in a dense M2M network is to decrease the number of redundant bits sents to the BS, and to decrease transmission power on the cellular link, as well as to decrease the number of MTDs transmitting to the BS.  Within each cluster, an MTD is designated as a \emph{cluster head} and will relay all of the data gathered from its cluster to the BS over the cellular link.  All cluster members send their data via M2M links to the designated cluster head.  Thus, the BS needs to allocate one resource block per cluster for the cellular link, instead of one resource block per MTD.  Moreover, clustering MTDs based on data correlation also reduces MTD transmission power consumption, and thus, energy-efficiency is increased and MTD battery lifetime is prolonged.  We assume that orthogonal resource blocks are allocated to the M2M links in each cluster.  The resource blocks of cellular links are orthogonal to M2M links.  Thus, each MTD $m$ is allocated one resource block, and transmit power per resource block is bounded by $q_m\in[0,Q_{\max}]$, for M2M link transmission.  Hence, there is no interference between any M2M links in the same cluster, that is, no intra-cluster interference.  However, M2M links in different clusters can simultaneously reuse the same resource blocks, which causes interference to M2M links in other clusters, that is, inter-cluster interference.

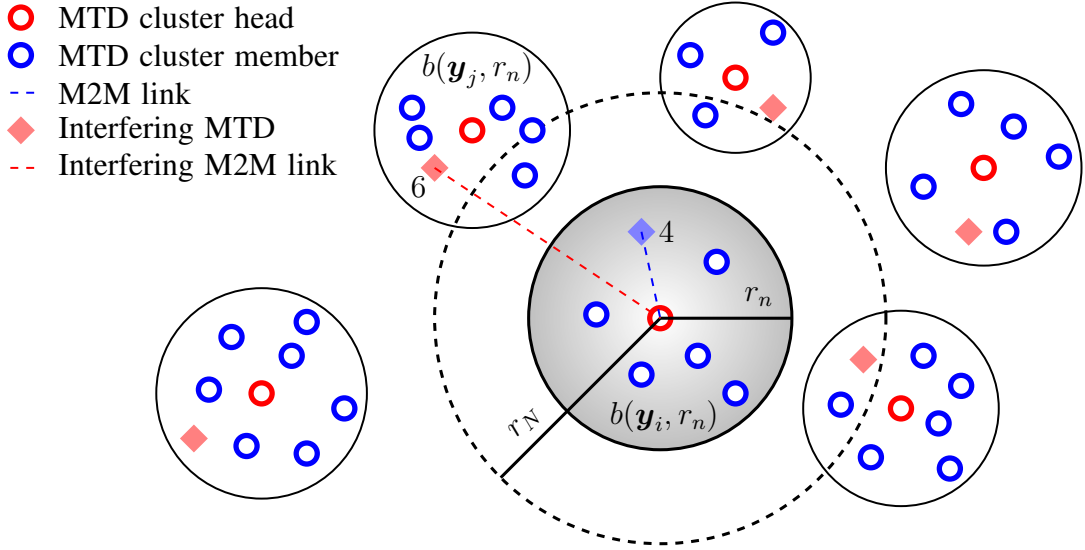
\begin{figure}
\centering
\begin{tikzpicture}
\draw[style=dashed,line width=1.15pt] (0,0) circle (3);
\shadedraw[inner color=white, outer color=gray!50, draw=gray] (0,0) circle (1.75);
\draw[line width = 1.15pt] (0,0) circle (1.75);
\node[text width=1.15pt] at (-0.65,-1.3) {$b(\boldsymbol{y}_i,r_n)$};
\draw[line width=2pt, color=red, fill=white] (0,0) circle (0.14);
\node[draw,scale=0.55,diamond,line width=1.5pt, color=blue!50, fill=blue!50] at (-0.25,1.15){};
\draw[dashed,line width=0.75pt,color=blue] (-0.25,1.15) -- (0,0);
\node[text width=1pt] at (0,1.15) {$4$};
\draw[line width=2pt, color=blue, fill=white] (0.75,0.75) circle (0.14);
\draw[line width=2pt, color=blue, fill=white] (-0.85,0.05) circle (0.14);
\draw[line width=2pt, color=blue, fill=white] (-0.25,-0.75) circle (0.14);
\draw[line width=2pt, color=blue, fill=white] (1,-1) circle (0.14);
\draw[line width=2pt, color=blue, fill=white] (0.5,-0.5) circle (0.14);
\draw[line width=1.15pt] (0,0) -- node[above,near end]{$r_{n}$} (0:1.75);
\draw[line width=1.15pt] (0,0) -- node[near end,above,sloped]{$r_{N}$} (225:3);
%%%%%%%%%%%%%%%%%%%%%%%%%%%%%%%%%%%%%%%%%%%
\draw[line width=0.75pt] (4.3,2) circle (1.3);
\draw[line width=2pt, color=red, fill=white] (4.3,2) circle (0.14);
\node[draw,scale=0.55,diamond,line width=1.5pt, color=red!50, fill=red!50] at (4.1,1.15){};
\draw[line width=2pt, color=blue, fill=white] (4.6,1.15) circle (0.14);
\draw[line width=2pt, color=blue, fill=white] (4,2.85) circle (0.14);
\draw[line width=2pt, color=blue, fill=white] (3.5,1.75) circle (0.14);
\draw[line width=2pt, color=blue, fill=white] (5.3,2.15) circle (0.14);
\draw[line width=2pt, color=blue, fill=white] (4.7,2.55) circle (0.14);
\draw[line width=0.75pt] (1,3.2) circle (1);
\draw[line width=2pt, color=red, fill=white] (1,3.2) circle (0.14);
\node[draw,scale=0.55,diamond,line width=1.5pt, color=red!50, fill=red!50] at (1.5,2.8){};
\draw[line width=2pt, color=blue, fill=white] (1.5,3.8) circle (0.14);
\draw[line width=2pt, color=blue, fill=white] (0.6,2.7) circle (0.14);
\draw[line width=2pt, color=blue, fill=white] (0.4,3.5) circle (0.14);
\draw[line width=0.75pt] (-2.5,2.5) circle (1.3);
\draw[line width=2pt, color=red, fill=white] (-2.5,2.5) circle (0.14);
\node[text width=1.15pt] at (-3.15,3.3) {$b(\boldsymbol{y}_j,r_n)$};
\node[draw,scale=0.55,diamond,line width=1.5pt, color=red!50, fill=red!50] at (-3,2){};
\draw[dashed,line width=0.75pt,color=red] (-3,2) -- (0,0);
\node[text width=1pt] at (-3.3,1.75) {$6$};
\draw[line width=2pt, color=blue, fill=white] (-1.8,1.9) circle (0.14);
\draw[line width=2pt, color=blue, fill=white] (-1.7,2.5) circle (0.14);
\draw[line width=2pt, color=blue, fill=white] (-3.3,2.8) circle (0.14);
\draw[line width=2pt, color=blue, fill=white] (-3.2,2.4) circle (0.14);
\draw[line width=2pt, color=blue, fill=white] (-2.1,2.8) circle (0.14);
\draw[line width=0.75pt] (-5.3,-1) circle (1.4);
\draw[line width=2pt, color=red, fill=white] (-5.3,-1) circle (0.14);
\node[draw,scale=0.55,diamond,line width=1.5pt, color=red!50, fill=red!50] at (-6.2,-1.6){};
\draw[line width=2pt, color=blue, fill=white] (-4.7,-0.05) circle (0.14);
\draw[line width=2pt, color=blue, fill=white] (-5.5,-1.7) circle (0.14);
\draw[line width=2pt, color=blue, fill=white] (-4.2,-1.2) circle (0.14);
\draw[line width=2pt, color=blue, fill=white] (-4.9,-0.5) circle (0.14);
\draw[line width=2pt, color=blue, fill=white] (-5.7,-0.25) circle (0.14);
\draw[line width=2pt, color=blue, fill=white] (-6,-0.95) circle (0.14);
\draw[line width=2pt, color=blue, fill=white] (-4.7,-1.8) circle (0.14);
\draw[line width=0.75pt] (3.2,-1.2) circle (1.3);
\draw[line width=2pt, color=red, fill=white] (3.2,-1.2) circle (0.14);
\node[draw,scale=0.55,diamond,line width=1.5pt, color=red!50, fill=red!50] at (2.7,-0.55){};
\draw[line width=2pt, color=blue, fill=white] (3.7,-1.4) circle (0.14);
\draw[line width=2pt, color=blue, fill=white] (2.4,-1.15) circle (0.14);
\draw[line width=2pt, color=blue, fill=white] (4,-0.9) circle (0.14);
\draw[line width=2pt, color=blue, fill=white] (3.5,-0.5) circle (0.14);
\draw[line width=2pt, color=blue, fill=white] (3.85,-2) circle (0.14);
\draw[line width=2pt, color=blue, fill=white] (2.8,-1.85) circle (0.14);
%%%%%%%%%%%%%%%%%%%%%%%%%%%%%%%%%%%%%%%%%%%%%%%
% Legend Symbols
\draw[line width=2pt, color=red, fill=white] (-8.5,4) circle (0.14);
\draw[line width=2pt, color=blue, fill=white] (-8.5,3.5) circle (0.14);
\draw[dashed,line width=0.75pt,color=blue] (-8.64,3) -- (-8.3,3);
\node[draw,scale=0.55,diamond,line width=1.5pt, color=red!50, fill=red!50] at (-8.5,2.5){};
\draw[dashed,line width=0.75pt,color=red] (-8.64,2) -- (-8.3,2);
% Legend Text
\node[text width=5cm] at (-5.5,4) {MTD cluster head};
\node[text width=5cm] at (-5.5,3.5) {MTD cluster member};
\node[text width=5cm] at (-5.5,3) {M2M link};
\node[text width=5cm] at (-5.5,2.5) {Interfering MTD};
\node[text width=5cm] at (-5.5,2) {Interfering M2M link};
%Legend Box

\end{tikzpicture}
\caption{Illustration example of clustering in an M2M network.\vspace{-0.5cm}}
\label{fig:SystemModel}
\end{figure}

We consider a reference cluster head, MTD $i$ located at $\boldsymbol{y}_i$, with an M2M link coverage area of radius $r_n$.  We model the area around MTD $i$ as a ball $b(\boldsymbol{y}_i,r_n)$, which forms a typical cluster $\mathcal{K}_{i,n}$ of $n$ MTDs, where $n=\lfloor\lambda_m\pi r_n^2\rfloor$.  Those MTDs within the ball are independently and uniformly distributed.  The maximum number of MTDs in a cluster is $N$, where $N=\lfloor\lambda\pi r_{N}^2\rfloor$, and $r_{N}$ is the maximum distance from the reference cluster head $i$, where MTDs can have a reliable communication with the cluster head.  A typical cluster, models a finite circular area, where MTDs within a cluster have highly correlated data.  The typical cluster is defined as follows:

\begin{definition}\label{def:cluster}
\textnormal{A \emph{typical cluster} $\mathcal{K}_{i,n}$ is defined as a cluster centered around reference cluster head $i$ located at $\boldsymbol{y}_i$ (the same point as the ball) and consists of $n$ MTDs.  The typical cluster $\mathcal{K}_{i,n}$ is a subset of the set of MTDs, $\mathcal{K}_{i,n}\subseteq\mathcal{M}$, that covers a circular area of radius $r_n$, with $n=\lfloor\lambda_m\pi r_n^2\rfloor$ MTDs.}
\end{definition}

From Definition~\ref{def:cluster}, we can see that the number of MTDs within the typical cluster $\mathcal{K}_{i,n}$ is a function of the cluster radius $r_n$ and the location of cluster head $\boldsymbol{y}_i$.  We assume that the MTDs form a collection of infinite disjoint clusters $\mathfrak{K}$.  A cluster $\mathcal{K}_{i,n}\in\mathfrak{K}$ is defined as a subset of $\mathcal{M}$, where $\bigcup_{\mathcal{K}_{i,n}\in\mathfrak{K}}\mathcal{K}_{i,n}=\mathcal{M}$.  Fig.~\ref{fig:SystemModel} represents an illustrative example of a ball and a typical cluster, in the M2M network.  Within the ball, we consider a typical cluster $\mathcal{K}_{i,n}\in\mathfrak{K}$ with cluster size $n$, and a reference cluster head located at the center of the ball and the typical cluster.  Note that, the index of the typical cluster directly relates to the cluster size, that is, the number of MTDs within the cluster including the reference cluster head.  As shown in Fig.~\ref{fig:SystemModel}, MTDs outside of the ball will cause inter-cluster interference to the reference cluster head $i$.  As illustrated in Fig.~\ref{fig:SystemModel}, we consider an example scenario, where cluster $\mathcal{K}_{i,n}$ has a radius $r_{n}$ and a maximum cluster radius of $r_N$.  In cluster $\mathcal{K}_{i,n}$ and cluster $\mathcal{K}_{j,n}$, MTDs $4$ and $6$ respectively, share the same resource block, and MTD $6$ from cluster $\mathcal{K}_{j,n}$ causes interference to cluster head $i$ in cluster $\mathcal{K}_{i,n}$.

We assume that, within any typical cluster $\mathcal{K}_{i,n}\in\mathfrak{K}$ during time slot $T$, each MTD $m$ must be able to send $H_m$ bits, as given by (\ref{entropy}), to the reference cluster head $i$.  Therefore, the transmission rate $R_{mi}$ of the M2M link between each cluster member $m\in\mathcal{K}_{i,n}\setminus\{i\}$ and cluster head $i$, must be greater than or equal to the threshold bit rate $\frac{H_m}{T}$, that is, $\forall m\in\mathcal{K}_{i,n}\setminus\{i\}:~R_{mi}\geq \frac{H_m}{T}$.  Based on the Gaussian random field model, the collection of data streams, $\boldsymbol{s}_i=[s_m]_{n \times 1}$, from all of the MTDs in cluster $\mathcal{K}_{i,n}$ will follow a multi-variate Gaussian distribution.  Here, MTDs within close proximity may gather correlated data~\cite{hsieh2015not}.  The covariance matrix of the data from the MTDs in each cluster will be $\boldsymbol{\Sigma}_{\mathcal{K}_{i,n}}=\left[\sigma_{ml}\right]_{n\times n}, \forall m,l \in \mathcal{K}_{i,n}$, where $\sigma_{ml}=\sqrt{\frac{\kappa}{\kappa+d_{ml}}}, \forall m,l \in \mathcal{K}_{i,n}$, $d_{ml}$ is the distance between MTD $m$ and $l$, and $\kappa$ is the correlation factor.  Thus, as the distance between MTDs decreases, the covariance will increase.  To model the joint information of data within each cluster, we use the notation of a joint entropy derived in~\cite{pattem2008impact}.  Since the number of MTDs considered in our system model is massive and the location of MTDs are random, deriving the exact distance between the cluster members and the cluster head can be challenging.  In order to simplify our model we assume the worst-case scenario of joint information in each cluster, that is, we assume that the distance between each cluster member and the cluster head is the maximum cluster radius, $r_n$.  Hence, the worst-case joint entropy $H_{\mathcal{K}_{i,n}}$ for each cluster will be given by~\cite{pattem2008impact}:
\begin{equation}\label{jointEntropy}
H_{\mathcal{K}_{i,n}}=H_m+H_m(n-1)\left(1-\frac{\alpha}{\frac{r_n}{c}+1}\right),
\end{equation}
where $\alpha=\frac{\log_2(e)}{\log_2(2\pi e)}$, $c$ is a correlation constant, and all MTDs have the same individual entropy $H_m$.

The transmission power per resource block, $q_{mi}$, for MTD $m$ in cluster $\mathcal{K}_{i,n}$ that is used to send $H_m$ bits over an M2M link to the reference cluster head $i$, is given by:
\begin{equation}\label{machine_TxP}
q_{mi}(t) = \left(\frac{I_{-i}(t)}{g_{mi}(t)}\right)\left( 2^{\frac{H_m}{TB}}-1\right),
\end{equation}
where $I_{-i}(t)$ is the inter-cluster interference from MTDs in clusters outside $\mathcal{K}_{i,n}$, if those MTDs use the same resource blocks that MTDs in cluster $\mathcal{K}_{i,n}$ use to send their data to cluster head $i$, thus $\mathcal{K}_{-i}=\{\mathcal{K}_j|\forall\mathcal{K}_j\in\mathfrak{K},\mathcal{K}_j\neq \mathcal{K}_i\}$.  The channel gain between MTD $m$ and MTD $i$ is given by $g_{mi}(t)=\xi(t)A_t\left(\frac{4\pi d_{mi}}{\mu}\right)^{-\nu}$, where $\nu$ is the path loss exponent, $d_{mi}$ is the distance between MTD $m$ and $i$, $\mu$ is the wavelength of an electromagnetic wave, $\xi(t)$ is the time-varying fading channel attenuation, and $A_t$ is the antenna gain of the transmitter and receiver. For simplicity, we assume that the thermal noise is negligible compared to the interference and hence is ignored.  For a quick reference, the notation used in this paper is summarized in Table~\ref{tab:notation}.

\begin{table}[t!]
\vspace{-0.4cm}
\begin{center}
\caption{Summary of the notations used throughout this work.\vspace{-0.3cm}}
\begin{tabular}{|c|p{14cm}|}
\hline\hline
\textbf{Notation}   & \textbf{Description} \\ \hline
$\phi_M$            & PPP modeling the locations of MTDs. \\ \hline
$\lambda_m$         & Density of $\phi_M$. \\ \hline
$\phi_{I_n}$        & PPP modeling the locations of interfering MTDs.\\ \hline
$\lambda_{In}$      & Density of $\phi_{I_n}$. \\ \hline
$g_{mi}(t)$         & Channel gain power between MTD $m$ and MTD $i$ at time slot $t$. \\ \hline
$\nu$               & Path loss exponent for all M2M and cellular links, where $\nu>2$.  \\ \hline
$\mathcal{K}_{i,n}$ & Typical cluster with radius $r_n$ and MTD $i$ as the cluster head; subset of MTDs $\mathcal{K}_{i,n}\subset\mathcal{M}$. \\ \hline
$n$     			& Size of a typical cluster $\mathcal{K}_{i,n}$, where $n=\lfloor\lambda_m\pi r_n^2\rfloor$. \\ \hline
$N$                 & Maximum number of MTDs per cluster.\\ \hline
$r_{N}$			    & Maximum distance from the reference cluster head $i$, i.e., the maximum cluster radius. \\ \hline
$p_m$; $q_m$        & Transmit power via the: cellular link, $p_m\in[0,P_{\max}]$, and the M2M link, $q_m\in[0,Q_{\max}]$.\\ \hline
$H_m$				& Individual joint entropy of MTD $m$.\\ \hline
$H_{\mathcal{K}_{i,n}}$ & Joint entropy of cluster $\mathcal{K}_{i,n}$.\\
\hline\hline
\end{tabular}
\label{tab:notation}
\end{center}
\end{table}

We derive an expression for M2M link distance within the typical cluster $\mathcal{K}_{i,n}$ and the reference cluster head $i$, by substituting the channel gain expression, $g_{mi}(t)$ into (\ref{machine_TxP}), and rearranging for $d_{mi}$, as follows:
\begin{align}\label{M2M-tx}
q_{mi}(t) &= \left(\frac{I_{-i}(t)}{\xi(t)A_t\left(\frac{4\pi d_{mi}}{\mu}\right)^{-\nu}}\right)\left( 2^{\frac{H_m}{TB}}-1\right), \nonumber \\
\therefore d_{mi} &= \frac{\mu \left(q_{mi}(t)\xi(t)A_t\right)^{\frac{1}{\nu}} }{4\pi \left(\left(2^{\frac{H_m}{TB}}-1\right)I_{-i}(t)\right)^{\frac{1}{\nu}}}.
\end{align}
Given (\ref{M2M-tx}) and assuming maximum M2M link transmission power is used $q_{mi}(t)= Q_{\max}$ for all MTDs within cluster $\mathcal{K}_{i,n}$, the worst-case scenario for maximum M2M link distance can be derived, which will result in maximum joint entropy.  Hence, the maximum distance to successfully send $H_m$ bits over an M2M link between MTD $m$ and cluster head $i$, is given by:
\begin{align}\label{Max-M2M-Dist}
D_{mi}=\frac{\mu \left(Q_{\max}\xi(t)A_t\right)^{\frac{1}{\nu}} }{4\pi \left(\left(2^{\frac{H_m}{TB}}-1\right)I_{-i}(t)\right)^{\frac{1}{\nu}}}.
\end{align}

Thus, the maximum radius for reference cluster head $i$ is $r_N=D_{mi}$.  The total transmission power of a typical cluster $\mathcal{K}_{i,n}$, can be defined as the summation of transmission powers over all M2M links and the cellular link within cluster $\mathcal{K}_{i,n}$.  Following (\ref{cellular_TxP}) and (\ref{machine_TxP}), the total transmission power $P_{\mathcal{K}_{i,n}}(t)$ of cluster $\mathcal{K}_{i,n}$ is given by:
\begin{align}\label{Power-K}
P_{\mathcal{K}_{i,n}}(t)= \sum_{m\in\mathcal{K}_{i,n}\setminus\{i\}}\left(\frac{I_{-i}(t)}{g_{mi}(t)}\right)\left(2^{\frac{H_m}{TB}}-1\right)+
 \left(\frac{BN_0}{g_{i}(t)}\right)\left(2^{\frac{H_{\mathcal{K}_{i,n}}}{TB}}-1\right).
\end{align}

As the correlation between MTD data within the cluster increases, this causes the total transmission power of the typical cluster $\mathcal{K}_{i,n}$ to also increase, which is due to the increasing number of MTDs within the typical cluster.  From (\ref{Power-K}), with decreasing joint entropy, the cellular link transmit power will also decrease. Thus, there are many benefits, such as, reduced transmit power per MTD and increasing redundant bits, when clustering MTDs based on correlation in an M2M network.  However, finding an optimal and centralized cluster formation for an infinite number of MTDs in a densely deployed M2M network is challenging.  Centralized clustering methods can potentially increase the complexity and signaling overhead due to the density of M2M networks. The signaling overhead can further increase due to the dynamics of the MTD network that can stem from various factors, such as the joining of new MTDs, MTD loss of battery, or rapid fluctuations in the sensing environment.  However, any fully distributed correlation-aware clustering of MTDs also depends on power allocation, location of MTDs, and the correlation between MTD data.  Thus, determining how MTDs form clusters in a fully distributed manner, when the number of MTDs is large and their location is random, is challenging.

Next, we propose a fully distributed correlation-aware MTD clustering framework based on \emph{evolutionary game theory}~\cite{khan2011evolutionary,luo2013evolutionary,platkowski2016evolutionary} for an infinite number of MTDs.  Accurate modeling of the M2M network topology becomes a key step towards meaningful performance analysis of correlation-aware clustering.  To perform such modeling, prior to formulating the game, we first use stochastic geometry tools~\cite{haenggi2012stochastic,chiu2013stochastic} to characterize metrics, such as the distribution of distances between MTDs and inter-cluster interference.  Then, based on these system parameters, we propose an evolutionary game model that can effectively capture the dynamics pertaining to an infinite number of possible MTD cluster formations.  At the convergence point, an evolutionary game is robust to potential deviations of MTDs that can result from factors such as, MTDs leaving/entering the network, or rapid fluctuations in the sensing environment.  The proposed model based on stochastic geometry and evolutionary game theory, can help network designers predict how design parameters, such as the density of MTDs, transmission power, resource block bandwidth, and duration time slot, will affect the evolutionary stable clusters of MTDs.

%%%%%%%%%%%%%%%%%%%%%%%%%%%%%%%%%%%%%%%%%%%%%%%%%%%%%%%%%%%%%%%%%%%%%%%%%%%%%%%%%%%%%%%%%%%
\section{Correlation-Aware Evolutionary Game in an Infinite M2M network} \label{sec:EGgame}
In this section, we propose an evolutionary game~\cite{khan2011evolutionary,luo2013evolutionary,platkowski2016evolutionary,jiang2014evolutionary,tembine2010evolutionary,young2014handbook,sartakhti2017mmp} for clustering an infinite number of MTDs, in order to reduce transmission power for each MTD as well as the number of redundant bits sent to the BS.  This approach is fully distributed, as it allows an infinite number of MTDs to self-organize into clusters based on increasing data correlation and reducing transmission power.  In order to define a tractable utility expression for this evolutionary game, we first use stochastic geometry, we characterize the distribution of distances from the reference cluster head to all cluster members.  Then, we determine the distribution of the inter-cluster interference to all cluster members, excluding the reference cluster head.  Since the number of MTDs within the network is massive, we cannot easily and accurately obtain all MTD locations in the network.  By characterizing the distance distributions between MTDs, we can precisely analyze the data correlation within each cluster.  In fact, we observe that, as the distance between MTDs decreases, the covariance between MTD data will increase, which in turn, causes the joint entropy of the cluster to decrease.  Using this analysis, we can then derive the utility function for our game and use it to analyze the outcome of the MTDs' interactions.

To model the distance distribution of each cluster member in any typical cluster $\mathcal{K}_{i,n}$ to the reference cluster head $i$, we consider the $k$-closest MTD approach, where MTD $m\in\mathcal{K}_{i,n}\setminus\{i\}$ is the $k^{th}$ closest MTD to cluster head $i$.  Therefore, we need to ``order" the distances from cluster head $i$ to all MTDs in cluster $\mathcal{K}_{i,n}$, in increasing order.  We define an \emph{ordered set} of distances $\mathcal{D}_i=\{d_{(k)i}\}_{k=1:n-1}$, by sorting $d_{mi}$ in ascending order, such that $d_{(1)i}\leq d_{(2)i}\leq \ldots\leq d_{(k)i}\leq\ldots\leq d_{(n-1)i}$.  Since the random variables of the sequence $\mathcal{D}_i$ are independent and identically distributed (i.i.d.), the probability distribution function (PDF) of the distance from the $k^{th}$ closest MTD to cluster head $i$, will be given by~\cite{srinivasa2010distance}:
\begin{align}\label{K_Distance}
f_{\mathcal{D}_i}^{(k)}(d)=\frac{(n-1)!}{(k-1)! (n-1-k)!}{F_{\mathcal{D}_i}(d)}^{k-1}f_{\mathcal{D}_i}(d)\big(1-{F_{\mathcal{D}_i}(d)}\big)^{n-1-k},
\end{align}
where $d=d_{(k)i}$, $f_{\mathcal{D}_i}(d_{(k)i})=\frac{2d_{(k)i}}{r_n^2}$ for $0 \leq d_{(k),i} \leq r_n$, and $F_{\mathcal{D}_i}(d_{(k)i})=\frac{d_{(k)i}^2}{r_n^2}$ for $0 \leq d_{(k),i} \leq r_n$.  Next, we characterize the distribution of the inter-cluster interference to the reference cluster head.

The worst-case inter-cluster interference for all M2M links in cluster $\mathcal{K}_{i,n}$ is considered.  We assume that the M2M links in different clusters $\mathcal{K}_{-i}=\{\mathcal{K}_j |\forall\mathcal{K}_j\in\mathfrak{K},\mathcal{K}_j\neq\mathcal{K}_i\}$ simultaneously reuse the same resource blocks.  Therefore, the distance between an interfering MTD from cluster $\mathcal{K}_{-i}$ to cluster head $i\in\mathcal{K}_{i,n}$ is greater than the cluster radius, $r_n$.  In the worst-case, one M2M link in each cluster uses the same resource block, thus the number of interfering MTDs is equal to the number of other clusters in $\mathcal{K}_{-i}\in\mathfrak{K}$, which is less than the total number of MTDs.  If MTDs form cluster with radius $r_n$, the location of interfering MTDs follows PPP, $\Phi_{I_n}$, with density $\lambda_{I_n}=\frac{1}{\pi r_n^2}$.  This is due to the fact that there is one randomly deployed interferer MTD in each cluster~\cite{haenggi2012stochastic}.  Based on the PPP $\phi_{I_n}$, the mean interference for a typical cluster head, is given by~\cite{haenggi2012stochastic}:
\begin{align}
\mathbb{E}(I_n) = \mathbf{E} \left( \sum_{\boldsymbol{y} \in \Phi_{I_n}} q A_t \xi(t) \left(\frac{4\pi}{\mu}\right)^{-\nu} \|\boldsymbol{y}\| ^{-\nu} \right)= q A_t \xi(t)\left(\frac{4\pi}{\mu}\right)^{-\nu}  \int_{\mathds{R}^2 \backslash  b(\boldsymbol{y}_i,r_n)}  \lambda_{I_n} \|\boldsymbol{x}\| ^{-\nu}   dx, \nonumber
\end{align}
\begin{equation}\label{Interferense}
\therefore \mathbb{E}(I_n) = q A_t\xi(t)\left(\frac{4\pi}{\mu}\right)^{-\nu} \left(\frac{1}{\pi r_n^2}\right) \int_{r_n}^\infty \int_0^{2\pi} r ^{-\nu}   r d\theta dr,
\end{equation}
where $r_n$ is the radius of a typical cluster.  Lets assume $2<\nu$ and $2 \neq \nu$, then the mean interference for a typical cluster head with radius $r_n$, is given by~\cite{haenggi2012stochastic}:
\begin{equation}
\mathbb{E}(I_n)= 2q A_t \xi(t)\left(\frac{4\pi}{\mu}\right)^{-\nu} \left(\frac{1}{r_n^\nu(\nu-2)}\right).
\label{Interferense_simple}
\end{equation}

The mean inter-cluster interference in (\ref{Interferense_simple}), is a function of the typical cluster radius and the path loss exponent.  As the network density increases, this causes the inter-cluster interference to also increase due to the increasing number of MTDs in the network.  Additionally, as the radius of the typical cluster increases, this causes the inter-cluster interference to reduce, due to reducing the number of interfering MTDs outside the cluster.

Now, having derived the needed metrics, next, we define the notion of a cluster type that is needed to formally define our proposed evolutionary game.  Then, we use the proposed game to study the distributed clustering problem for an infinite number of MTDs, while taking into account MTD data correlation and transmission power.
\begin{definition}~\label{def:type}
\textnormal{A \emph{cluster type} $j\in\mathcal{S}$ represents the number of MTDs within a cluster, that is, a cluster of size $j$.}
\end{definition}

Thus, the proposed evolutionary game can be formally defined as follows:
\begin{definition}
\textnormal{An \emph{evolutionary game} is defined by, $G_{E}=(\mathcal{M},\mathcal{S},\boldsymbol{x},u)$, where $\mathcal{M}$ is an infinite set of MTDs (population), $\mathcal{S}$ is the finite set of cluster types (pure strategy set), $\boldsymbol{x}$ is the population state, and $u$ is the utility function of a cluster.}
\end{definition}

In our proposed evolutionary game, the finite set of cluster types for a cluster is, $\mathcal{S}=\{1,2,\ldots,N\}$, which denotes the set of potential cluster sizes.  Note that, $N= \lfloor\lambda_m\pi r_{N}^2\rfloor$ is the maximum potential cluster size for any cluster head within the ball, with $r_{N}$ being the maximum M2M range which is equal to $D_{mi}$ in (\ref{Max-M2M-Dist}).

The population state vector $\boldsymbol{x}\in\mathds{R}^{N}$, where $\sum_{j\in\mathcal{S}}x_j=1$, captures the percentage of MTDs forming potential cluster sizes.  Thus, each element $x_j$ of $\boldsymbol{x}$ represents the average percentage of MTDs forming a cluster size $j$, according to Definition~\ref{def:type}.  The utility achieved by a given cluster $\mathcal{K}_{i,j}$ at time slot $t$ is denoted as $u_{\mathcal{K}_{i,j}}(t)$.  Note that, within the ball $b(\boldsymbol{y}_i,r_j)$, we consider one cluster whose cluster head is located at $\boldsymbol{y}_i$ in the network.

We need to derive a utility function that captures the average transmission power per MTD per cluster $\mathcal{K}_{i,j}$.  The defined utility function will include two main terms.  In the first term, we model the average transmission power of M2M links across all MTDs $m\in\mathcal{K}_{i,j}\setminus\{i\}$ to the reference cluster head $i$, as defined in (\ref{machine_TxP}).  In the second term of the utility function, we capture the cellular link transmit power of cluster head $i$ to the BS, as defined in (\ref{cellular_TxP}).  Since we are considering an infinite number of MTDs, the utility function must also take into account the MTD distance distributions, as defined in (\ref{K_Distance}), and inter-cluster interference, as defined in (\ref{Interferense_simple}), in order to accurately model MTD average transmission power.  Thus, the closed-form expression of the proposed utility function at time slot $t$, $u_{\mathcal{K}_{i,j}}(t)$, can be derived as follows.

\begin{theorem}\label{Theor_utility}
\textnormal{The closed-form expression of the utility function $u_{\mathcal{K}_{i,j}}(t)$ when typical cluster $\mathcal{K}_{i,j}$ chooses type $j$, is expressed as follows:
\begin{align}\label{Utility-closed}
u_{\mathcal{K}_{i,j}}(t)=&-\frac{1}{j}\left(\frac{1}{A_t\xi(t)}\right)\left(\frac{4\pi r_j}{\mu}\right)^{\nu}\left(\frac{2\left(j-1\right)\mathbb{E}(I_j)}{2+\nu}\right) \left(2^{\frac{H_m}{TB}}-1\right) \nonumber \\
&-\frac{1}{j}\left(\frac{BN_0}{g_{i}}\right)\left(2^{\frac{H_m+H_m(j-1)\left(1-\frac{\alpha}{\frac{r_j}{c}+1}\right)}{TB}}-1\right),
\end{align}
where $j= \lfloor\lambda_m\pi r_j^2\rfloor$.}
\end{theorem}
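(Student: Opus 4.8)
The plan is to define the cluster's utility as the negative of the expected total transmission power per MTD, namely $u_{\mathcal{K}_{i,j}}(t) = -\frac{1}{j}\,\mathbb{E}\bigl[P_{\mathcal{K}_{i,j}}(t)\bigr]$, the sign reflecting that a lower power corresponds to a higher payoff. I would start from the total cluster power in (\ref{Power-K}). Because that expression splits additively into a sum of M2M link powers over the members $m\in\mathcal{K}_{i,j}\setminus\{i\}$ plus a single cellular link power for the head, I would evaluate the expectation of each piece separately and recombine at the end, the $-1/j$ normalisation being applied last.

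For the M2M term, first substitute the channel gain $g_{mi}(t)=\xi(t)A_t(4\pi d_{mi}/\mu)^{-\nu}$ into each summand, turning it into $\frac{I_{-i}(t)}{\xi(t)A_t}\left(\frac{4\pi d_{mi}}{\mu}\right)^{\nu}\bigl(2^{H_m/(TB)}-1\bigr)$. The expectation is then over the intra-cluster distances $\{d_{mi}\}$ and the inter-cluster interference $I_{-i}(t)$. The key observation is that $I_{-i}(t)$ is generated by MTDs outside the ball $b(\boldsymbol{y}_i,r_j)$ whereas the distances are those of members inside the ball, so the two are independent; the expectation factors and $I_{-i}(t)$ is replaced by its mean $\mathbb{E}(I_j)$ from (\ref{Interferense_simple}). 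For the distance part I would use the fact that summing $d_{(k)i}^{\nu}$ over all order statistics $k=1,\dots,j-1$ in (\ref{K_Distance}) is the same as summing over the unordered i.i.d. distances, so $\mathbb{E}\bigl[\sum_k d_{(k)i}^{\nu}\bigr]=(j-1)\,\mathbb{E}[d^{\nu}]$. The single moment is then the elementary integral against the uniform-in-disc density $f_{\mathcal{D}_i}(d)=2d/r_j^2$, giving $\mathbb{E}[d^{\nu}]=\int_0^{r_j} d^{\nu}\,\tfrac{2d}{r_j^2}\,dd=\frac{2r_j^{\nu}}{\nu+2}$. Collecting the constants into $\left(\frac{4\pi r_j}{\mu}\right)^{\nu}$ reproduces exactly the first line of (\ref{Utility-closed}).

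For the cellular term, $\bigl(\frac{BN_0}{g_i(t)}\bigr)\bigl(2^{H_{\mathcal{K}_{i,j}}/(TB)}-1\bigr)$, I would substitute the worst-case joint entropy (\ref{jointEntropy}) evaluated at cluster size $n=j$, namely $H_{\mathcal{K}_{i,j}}=H_m+H_m(j-1)\bigl(1-\frac{\alpha}{r_j/c+1}\bigr)$, directly into the exponent. Under the worst-case assumption that every member lies at radius $r_j$, this term carries no residual distance randomness, so its expectation is itself and it matches the second line of (\ref{Utility-closed}). Summing the two contributions, multiplying by $-1/j$, and recording $j=\lfloor\lambda_m\pi r_j^2\rfloor$ completes the derivation.

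The step I expect to be the crux is the distance expectation in the M2M term: one must justify that the $\nu$-th moment summed over the order statistics collapses to $(j-1)$ copies of the plain moment (ordering is a mere permutation of the sum) and that the interference can be pulled out as $\mathbb{E}(I_j)$ by independence of the inside-ball and outside-ball point patterns. Once these two reductions are secured, everything that remains is the routine moment $\int_0^{r_j} d^{\nu+1}\,dd$ and bookkeeping of multiplicative constants.
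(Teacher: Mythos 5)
Your proposal is correct and follows the same overall decomposition as the paper: utility equals minus the per-MTD average of the cluster power in (\ref{Power-K}), with the M2M term averaged over member locations and the interference replaced by its mean $\mathbb{E}(I_j)$ from (\ref{Interferense_simple}), and the cellular term obtained by substituting the worst-case joint entropy (\ref{jointEntropy}). Where you genuinely diverge is at the step you correctly identify as the crux. The paper keeps the order-statistic density $f_{\mathcal{D}_i}^{(k)}$ from (\ref{K_Distance}) explicitly, evaluates each $\int_0^{r_j} r^{\nu}\,f_{\mathcal{D}_i}^{(k)}(r)\,dr$ via the substitution $a=r^2/r_j^2$ as a beta integral $\frac{r_j^{\nu}\Gamma(j-k)\Gamma(k+\nu/2)}{\Gamma(j+\nu/2)}$ times the multinomial prefactor, and then needs the Gamma-function identity $\sum_{k=1}^{j-1}\frac{\Gamma(j)\Gamma(k+\nu/2)}{\Gamma(k)\Gamma(j+\nu/2)}=\frac{2(j-1)}{2+\nu}$ to close the sum. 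You instead observe that $\sum_k d_{(k)i}^{\nu}$ is permutation-invariant, so its expectation is $(j-1)\,\mathbb{E}[d^{\nu}]=(j-1)\frac{2r_j^{\nu}}{\nu+2}$ by the elementary moment of the uniform-in-disc density. The two computations agree (one can check the paper's identity at $j=2,3$ against your closed form), but your route is shorter, avoids the Gamma-function bookkeeping and the attendant side conditions on $\nu$, and makes transparent why the answer is simply $(j-1)$ times a single moment. One minor remark: your explicit justification that $I_{-i}(t)$ can be pulled out by independence of the inside-ball and outside-ball point patterns is an assumption the paper makes silently by writing $\mathbb{E}(I_j)$ inside the integrand from the outset, so you have if anything been more careful there.
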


\begin{proof}
\textnormal{See Appendix~\ref{app:A}.}
\end{proof}

The proposed utility function (\ref{Utility-closed}), is the average transmission power per MTD per cluster $\mathcal{K}_{i,j}$ selecting cluster size $j$, during time slot $t$.  Depending on the cluster type of the typical cluster, this will affect the size of the cluster and the average transmission power per MTD per cluster.  Thus, as the cluster size increases (that is, increasing the cluster type), this causes average transmission power and signaling overhead to increase, due to the increasing number of MTDs within the cluster sharing their data over the M2M link.

\subsection{Dynamics of Cluster Formation}
Evolutionary game theory uses biologically-inspired dynamics to model how individual MTDs form different types of clusters (i.e., cluster of different sizes) over time~\cite{tembine2010evolutionary,young2014handbook}.  In our MTD game, we assume that the percentage of MTDs that select cluster type $j\in\mathcal{S}$ is $x_j$, where $x_j\in[0,1]$ is an element of the population state vector $\boldsymbol{x}$.

To update the percentage of MTDs selecting cluster type $j$, we adopt properties from continuous-time replicator dynamics (see~\cite{tembine2010evolutionary,young2014handbook}, and~\cite{niyato2009dynamics}). Replicator dynamics are used to model the rate of cluster $\mathcal{K}_{i,j}$ selecting strategy type $j$ from the set of cluster types $\mathcal{S}$.  Over time, the MTDs in a cluster $\mathcal{K}_{i,j}$ will update their preference in $\boldsymbol{x}$ and will become more certain about what cluster type they would prefer to form~\cite{niyato2009dynamics,tembine2010evolutionary}.  In continuous-time replicator dynamics, the rate of MTDs selecting cluster type $j$ is proportional to the difference between the fitness of cluster type $j$ and the average expected fitness of the population~\cite{sartakhti2017mmp,young2014handbook}.  Note that, fitness of a type is defined as the average payoff of that type, and is a function of the population state $\boldsymbol{x}$.  We use replicator dynamics to model the evolution of MTD cluster size preferences, based on high data correlation and reduced transmission power~\cite{yan2017evolutionary,young2014handbook,tembine2010evolutionary}.  The evolution of cluster type $j$, is given by:
\begin{equation}\label{xdot}
\dot{x}_j(t) = x_j(t)(\bar{u}_{\mathcal{K}_{i,j}}(\boldsymbol{x},t)-U_{\mathcal{K}_i}(\boldsymbol{x},t)),
\end{equation}
where $\bar{u}_{\mathcal{K}_{i,j}}(\boldsymbol{x},t)$ is the fitness of cluster $\mathcal{K}_{i,j}$ with size $j$, and $U_{\mathcal{K}_i}(\boldsymbol{x},t)$ is the average expected fitness of the population.  Thus, the fitness of cluster type $j$, is defined as:
\begin{equation}\label{fitness}
\bar{u}_{\mathcal{K}_{i,j}}(\boldsymbol{x},t) = \sum_{j^\prime\in\mathcal{S}}u_{\mathcal{K}_{i,jj^\prime}}(t)x_{j^\prime}(t),
\end{equation}
where $u_{\mathcal{K}_{i,jj^\prime}}(t)=u_{\mathcal{K}_{i,j}}$ if $u_{\mathcal{K}_{i,j}} \geq u_{\mathcal{K}_{i,j^\prime}}$, or $u_{\mathcal{K}_{i,jj^\prime}}(t)=u_{\mathcal{K}_{i,j^\prime}}$ if $u_{\mathcal{K}_{i,j^\prime}} > u_{\mathcal{K}_{i,j}}$; $x_{j^\prime}$ is the population state of cluster type $j^\prime\in\mathcal{S}$ in the evolutionary game; and $u_{\mathcal{K}_{i,j}}$ is given by (\ref{Utility-closed}).  Furthermore, the average expected fitness of the population, $U_{\mathcal{K}_i}(\boldsymbol{x},t)$, is given by:
\begin{equation}\label{Eutility}
U_{\mathcal{K}_i}(\boldsymbol{x},t) = \sum_{j\in\mathcal{S}}\bar{u}_{\mathcal{K}_{i,j}}(\boldsymbol{x},t)x_j(t).
\end{equation}

The evolution of cluster type $j$ for the MTD population in (\ref{xdot}) can be either greater than, less than, or equal to zero.  If $\dot{x}_j$ is greater than $0$, this implies that the fitness of cluster type $j$ is greater than the average expected fitness of the population, and thus the percentage of the population selecting cluster type $j$ is increasing.  If $\dot{x}_j$ is less than $0$, this indicates that the percentage of the population selecting cluster type $j$ is decreasing, that is, cluster type $j$ is growing extinct.  When $\dot{x}_j$ is equal to $0$ then, we have a stationary point for the percentage of MTDs selecting cluster type $j$ (an evolutionary equilibrium for cluster type $j$).  Thus, no MTD has incentive to change their cluster type from $j$, as the fitness of cluster type $j$ is equal to the average expect fitness of the population.

To solve the proposed evolutionary game, we propose a fully distributed algorithm to find a stable cluster formation for correlated-aware clustering in M2M communications.  The proposed distributed algorithm, shown in Algorithm~\ref{algorithm-E}, enables an infinite number of MTDs to autonomously update their type to form the best cluster.  Algorithm~\ref{algorithm-E} outputs the population state, $\boldsymbol{x}^\circ$, which includes the percentage of MTDs in the population selecting a cluster type from the strategy set.

\subsection{Evolutionary Game Stability Analysis}\label{Sec:Stable}
To solve the evolutionary game, we consider the concept of an ESS.  The ESS is robust to a small portion $\epsilon_x\in(0,1),\forall~\epsilon\in(0,\epsilon_x)$, of MTDs changing their cluster type.  This change in cluster type could be due to the joining of new MTDs, rapid fluctuations in the sensing environment, or MTD loss of battery.  The ESS is defined as follows:

\begin{algorithm}[!t]
\caption{Evolutionary Game for Correlation-Aware Clustering in Massive M2M Network} \label{algorithm-E}
\begin{algorithmic}[1]
\State \textbf{Input:} Network density: $\lambda_m$; Set of MTDs: $\mathcal{M}$; Maximum radius: $r_N$;
\State Place a ball $b(\boldsymbol{y}_i,r_n)$ randomly in the network, centered at $\boldsymbol{y}_i, i\in\mathcal{M}$ with radius $r_n$, where MTD $i$ is the cluster head.
\State Find the maximum number of MTDs, $N$, within the ball, given the network density $\lambda_m$ and radius of the ball $r_N$, $N=\lfloor\lambda_m\pi r_N^2 \rfloor$;
\State Define the set of cluster types for a cluster as $\mathcal{S}=\{1,2,\ldots,N\}$, and set the initial population state for all cluster types as $\boldsymbol{x}(t=0)$;
\Repeat
\For{cluster type $j\in 1\ldots N$}
\parState{%
Find the inter-cluster interference, $\mathbb{E}(I_j)$, given the network density $\lambda_m$ and radius of the ball $r_j$;}
\parState{%
For a cluster $\mathcal{K}_{i,j}$ that chooses cluster type $j$, calculate the utility function, $u_{\mathcal{K}_{i,j}}(t)$, as in (\ref{Utility-closed});}
\parState{%
For cluster type $j$ determine the evolution of MTD preference, $\dot{x}_{j}(t)$, using (\ref{xdot}), (\ref{fitness}), and (\ref{Eutility});}
\State Update population state for cluster type $j$: $x_{j}(t+1)=x_{j}(t)+\dot{x}_{j}(t)$;
\EndFor
\State $t=t+1$;
\Until{~Convergence to ESS and $\dot{x}_{j}(t)=0~\forall j\in\mathcal{S}$}
\State \textbf{Output:} Population state, $\boldsymbol{x}^\circ=\{x_j|\forall j\in\mathcal{S}\}$, that represents the formed MTD clusters.
\end{algorithmic}
\end{algorithm}

\begin{definition}\label{ESS_def}
\textnormal{The population state $\boldsymbol{x}^*\in\mathds{R}^{N}$ is an \emph{ESS}, if there exists a portion of MTDs $\epsilon^*_j>0$ for each cluster size $j\in\mathcal{S}$, such that for all $0<\epsilon<\epsilon^*_j$ and for all $j\in\mathcal{S}$:
\begin{align}\label{ESS}
U_{\mathcal{K}_i}(x_j^*(t),(1-\epsilon)\boldsymbol{x}_{-j}^*(t)+\epsilon\boldsymbol{x}^\prime_{-j}(t))> U_{\mathcal{K}_i}(x_j^\prime(t),(1-\epsilon)\boldsymbol{x}_{-j}^*(t)+\epsilon\boldsymbol{x}_{-j}^\prime(t))
\end{align}
where $\boldsymbol{x}^\prime(t)\in\mathds{R}^{N}$ is any population state which is different from $\boldsymbol{x}^*(t)$, i.e., $\boldsymbol{x}^\prime(t)\neq\boldsymbol{x}^*(t)$ and $x^\prime_j(t)$ is an element of $\boldsymbol{x}^\prime(t)$.  In particular, $\epsilon\boldsymbol{x}^\prime(t)$ represents portion $\epsilon$ of MTDs from the population, that will choose a strategy from the population state $\boldsymbol{x}^\prime(t)$ instead of $\boldsymbol{x}^*(t)$.}
\end{definition}

The replicator dynamics in (\ref{xdot}) capture the dynamics of distributed MTD clustering in our proposed algorithm.  Algorithm~\ref{algorithm-E} converges to percentages of various cluster sizes within the MTD population, and if so to which percentages~\cite{shoham2008multiagent}.  Next, we prove that the proposed Algorithm~\ref{algorithm-E} converges to a population state $\boldsymbol{x}^*$, and we find the maximum portion of MTDs, $\epsilon$, that may deviate from an ESS, based on Definition~\ref{ESS_def}.

\begin{theorem}\label{Theor_band_epsilon}
\textnormal{The proposed Algorithm~\ref{algorithm-E} converges to a population state, $\boldsymbol{x}^*$, which is an ESS for the proposed evolutionary correlation-aware clustering game in $G_{E}$. At the convergence, the maximum portion of MTDs that can deviate from ESS, $\mathcal{\epsilon}^*$, is given as follows:}.
\begin{equation}\label{Band_epsilon}
  \epsilon^*=\min_{j\in \mathcal{S}}
  \frac{1}{
2\bar{u}_{\mathcal{K}_{i,j}}(\boldsymbol{x}^*,t)-u_{\mathcal{K}_{i,j}}(\boldsymbol{x}^*,t)}.
\end{equation}
\end{theorem}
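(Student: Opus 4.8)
The plan is to establish Theorem~\ref{Theor_band_epsilon} in two stages: first prove that the replicator dynamics in (\ref{xdot}) drive Algorithm~\ref{algorithm-E} to a stationary population state $\boldsymbol{x}^*$ that is an ESS, and then, at that state, extract the largest invasion fraction $\epsilon^*$ by expanding the ESS inequality (\ref{ESS}). The structural observation I would exploit throughout is that the pairwise payoffs $u_{\mathcal{K}_{i,jj'}}=\max\{u_{\mathcal{K}_{i,j}},u_{\mathcal{K}_{i,j'}}\}$ defined below (\ref{fitness}) are \emph{symmetric} in $j$ and $j'$, so the induced payoff matrix $M=[u_{\mathcal{K}_{i,jj'}}]$ satisfies $M=M^{\top}$ and the proposed game is a doubly symmetric (partnership) game. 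This symmetry is what makes both stages tractable, and I note the diagonal identification $M_{jj}=u_{\mathcal{K}_{i,j}}$ that will reappear in the second stage.

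For the convergence stage, I would first characterize the rest points of (\ref{xdot}): $\dot x_j=0$ for every $j$ forces, for each cluster type, either $x_j=0$ or $\bar u_{\mathcal{K}_{i,j}}(\boldsymbol{x})=U_{\mathcal{K}_i}(\boldsymbol{x})$, i.e.\ every surviving type earns exactly the population-average fitness. I would then use the average fitness $U_{\mathcal{K}_i}(\boldsymbol{x})=\sum_{j}\bar u_{\mathcal{K}_{i,j}}(\boldsymbol{x})\,x_j=\boldsymbol{x}^{\top}M\boldsymbol{x}$ from (\ref{Eutility}) as a Lyapunov (potential) function. Differentiating along the replicator flow and invoking $M=M^{\top}$ yields the variance-of-fitness identity $\dot U_{\mathcal{K}_i}=2\sum_j x_j(\bar u_{\mathcal{K}_{i,j}}-U_{\mathcal{K}_i})^2\ge 0$, the fundamental theorem of natural selection, so $U_{\mathcal{K}_i}$ is nondecreasing and constant only at rest points. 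Since the simplex $\{\boldsymbol{x}:\sum_j x_j=1,\ x_j\ge0\}$ is compact and $U_{\mathcal{K}_i}$ is bounded, LaSalle's invariance principle forces trajectories to converge to the rest-point set, and the folk theorem of evolutionary game theory identifies the asymptotically stable (local-maximizer) rest points with the ESS, giving convergence of Algorithm~\ref{algorithm-E} to an ESS $\boldsymbol{x}^*$. A short additional argument then transfers this continuous-time conclusion to the explicit update $x_j(t+1)=x_j(t)+\dot x_j(t)$, by treating it as a forward-Euler discretization with sufficiently small effective step and verifying that $U_{\mathcal{K}_i}$ still increases per iteration.

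For the invasion-fraction stage, I would start from (\ref{ESS}) and expand the post-entry population $(1-\epsilon)\boldsymbol{x}^*_{-j}+\epsilon\boldsymbol{x}'_{-j}$ using the bilinearity of $U_{\mathcal{K}_i}$ in the population state. Writing the inequality as a difference and collecting the $\epsilon$-dependent terms gives a linear-in-$\epsilon$ condition of the form $(1-\epsilon)\,A_j+\epsilon\,B_j>0$, where $A_j$ and $B_j$ are inner products of the deviation $\boldsymbol{x}^*-\boldsymbol{x}'$ against $M\boldsymbol{x}^*$ and $M\boldsymbol{x}'$. Substituting the stationarity relations just derived (each surviving type's fitness equals the average, and the diagonal entry is $M_{jj}=u_{\mathcal{K}_{i,j}}$) collapses these inner products into expressions in $\bar u_{\mathcal{K}_{i,j}}(\boldsymbol{x}^*)$ and $u_{\mathcal{K}_{i,j}}(\boldsymbol{x}^*)$. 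Solving the resulting inequality for the threshold value of $\epsilon$ and taking the worst (smallest) threshold over all cluster types $j\in\mathcal{S}$ yields the closed form (\ref{Band_epsilon}), namely $\epsilon^*=\min_{j}\big(2\bar u_{\mathcal{K}_{i,j}}(\boldsymbol{x}^*,t)-u_{\mathcal{K}_{i,j}}(\boldsymbol{x}^*,t)\big)^{-1}$.

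I expect the main obstacle to be the invasion-fraction algebra rather than the convergence argument. Two points require care. First, the bilinear expansion must be carried out with the $\max$-payoff structure and the sign of the (negative, power-valued) utilities tracked explicitly, since it is precisely this bookkeeping that reduces the generic invasion-barrier ratio to the compact denominator $2\bar u_{\mathcal{K}_{i,j}}-u_{\mathcal{K}_{i,j}}$; getting the identifications $M_{jj}=u_{\mathcal{K}_{i,j}}$ and $\partial_{x_j}U_{\mathcal{K}_i}=2\bar u_{\mathcal{K}_{i,j}}$ to line up is the delicate step. Second, because the $\max$ makes $M$ only piecewise constant in the utilities, one must verify that the Lyapunov differentiation and the bilinear expansion remain valid on each region where the ordering of the $u_{\mathcal{K}_{i,j}}$ is fixed, and that the minimization over $j$ correctly selects the binding type. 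The convergence stage, by contrast, is essentially a direct application of the partnership-game Lyapunov argument once the symmetry of $M$ is noted.
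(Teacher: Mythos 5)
Your convergence argument takes a genuinely different route from the paper's and is, in fact, the cleaner one. The paper linearizes the replicator dynamics at the rest point, computes the Jacobian $\frac{\partial \dot{x}_j^*}{\partial x_m^*}=\big(u_{\mathcal{K}_{i,jm}}(\boldsymbol{x}^*,t)-2\bar{u}_{\mathcal{K}_{i,m}}(\boldsymbol{x}^*,t)\big)x_m^*$, argues its negativity from the Nash property of $\boldsymbol{x}^*$, and concludes asymptotic stability and hence ESS. You instead exploit the symmetry $u_{\mathcal{K}_{i,jj'}}=\max\{u_{\mathcal{K}_{i,j}},u_{\mathcal{K}_{i,j'}}\}=u_{\mathcal{K}_{i,j'j}}$ to recognize a partnership game and run the mean-fitness Lyapunov / LaSalle argument. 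That buys you a global statement (monotone increase of $U_{\mathcal{K}_i}$ along trajectories) where the paper only gets local stability plus a citation for convergence; the residual caveats on both sides are comparable (LaSalle only delivers the rest-point set, so a saddle cannot be excluded without a genericity argument, just as the paper's strict negativity of the Jacobian is asserted rather than proved for the $\max$-payoff matrix). Your Euler-discretization remark is also a point the paper silently skips.

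The invasion-fraction stage, however, has a genuine gap: your plan does not lead to the stated formula. Expanding (\ref{ESS}) bilinearly gives, for each deviation $\boldsymbol{x}'$, a condition of the form $(1-\epsilon)A(\boldsymbol{x}')+\epsilon B(\boldsymbol{x}')>0$ with $A(\boldsymbol{x}')=(\boldsymbol{x}^*-\boldsymbol{x}')^{\top}M\boldsymbol{x}^*$ and $B(\boldsymbol{x}')=(\boldsymbol{x}^*-\boldsymbol{x}')^{\top}M\boldsymbol{x}'$, so the invasion barrier is an \emph{infimum over deviation states} of the ratio $A/(A-B)$; there is no demonstrated reduction of that infimum to a minimum over cluster types of $\big(2\bar{u}_{\mathcal{K}_{i,j}}-u_{\mathcal{K}_{i,j}}\big)^{-1}$, and the stationarity relations you invoke (surviving types earn average fitness) kill $A$ on the support of $\boldsymbol{x}^*$ rather than producing that denominator. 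The paper obtains (\ref{Band_epsilon}) by an entirely different mechanism: it perturbs the converged state along the diagonal of the replicator Jacobian, $x_j^{\epsilon}=x_j^*+\epsilon_j\frac{\partial\dot{x}_j}{\partial x_j}(\boldsymbol{x}^*)$ with $\frac{\partial\dot{x}_j}{\partial x_j}(\boldsymbol{x}^*)=\big(u_{\mathcal{K}_{i,j}}-2\bar{u}_{\mathcal{K}_{i,j}}\big)x_j^*$, and the bound comes from the \emph{feasibility} constraint $x_j^{\epsilon}\geq 0$ together with $|\epsilon_j|\leq 1$ and $\sum_j\epsilon_j=0$ — i.e., the factor $2\bar{u}_{\mathcal{K}_{i,j}}-u_{\mathcal{K}_{i,j}}$ is the (negated, $x_j^*$-normalized) diagonal Jacobian entry, not a coefficient extracted from the ESS inequality. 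Without that identification, the "delicate algebra" you defer cannot close, because the quantity you are computing (the classical invasion barrier) is simply a different object from the one the theorem reports.
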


\begin{proof}
\textnormal{See Appendix~\ref{app:B}.}
\end{proof}

\vspace{-0.3cm}
\section{Simulation Results}\label{Sec:Sim}
For our simulations, we consider a single BS located at the center of a circular area with a 2~km radius.  We consider a Poission-based distribution for distributing MTDs around a randomly placed MTD within the network.  We focus on a small section of the circular area around the reference cluster head, that has a 500~m radius.  The BS allocates orthogonal resource blocks to cluster heads that they can use to send the data of the clusters to the BS.  Within each cluster, each MTD is also allocated an orthogonal resource block, to send its data to the cluster head.  However, M2M links in different clusters can simultaneously reuse the same resource blocks, resulting in inter-cluster interference.  Each resource block has a fixed bandwidth of $B=180$~kHz, and the maximum transmission power over the cellular link is $P_{\max}=35$~dBm and the M2M link is $Q_{\max}=20$~dBm.  The duration of each time slot $t$ is fixed to $T=1$~ms. We consider a carrier frequency of $2$~GHz.  Furthermore, we assume MTDs are static, where the cellular and M2M links have a path loss exponent of $2.5$.  The cellular and M2M links have a transmit and receive antenna gain of $9.54$~dB.  The noise power spectral density over a cellular link is $-176$~dBm/Hz.  The data source of each MTD $m$ has $\mu_m=0$ and $\sigma_m=10$.  The quantization step for each MTD is set to $\Delta=\frac{1}{256}$~\cite{hsieh2015not}.

The evolutionary game is evaluated using Monte Carlo simulations with varying MTD density $\lambda_m$, path loss exponent $\nu$, and correlation constant $c$.  We compare our proposed evolutionary game at convergence to two benchmarks, which are: (i) a pure cluster type, where the population of MTDs prefer to select cluster type $N$ with preference $1$; and (ii) uniform cluster type, where an equal percentage of the MTD population are uniformly distributed across all cluster types, that is, each cluster type has preference $\frac{1}{N}$.

\begin{figure*}[!t]
 \centering
 \begin{multicols}{2}
 \includegraphics[width=\linewidth]{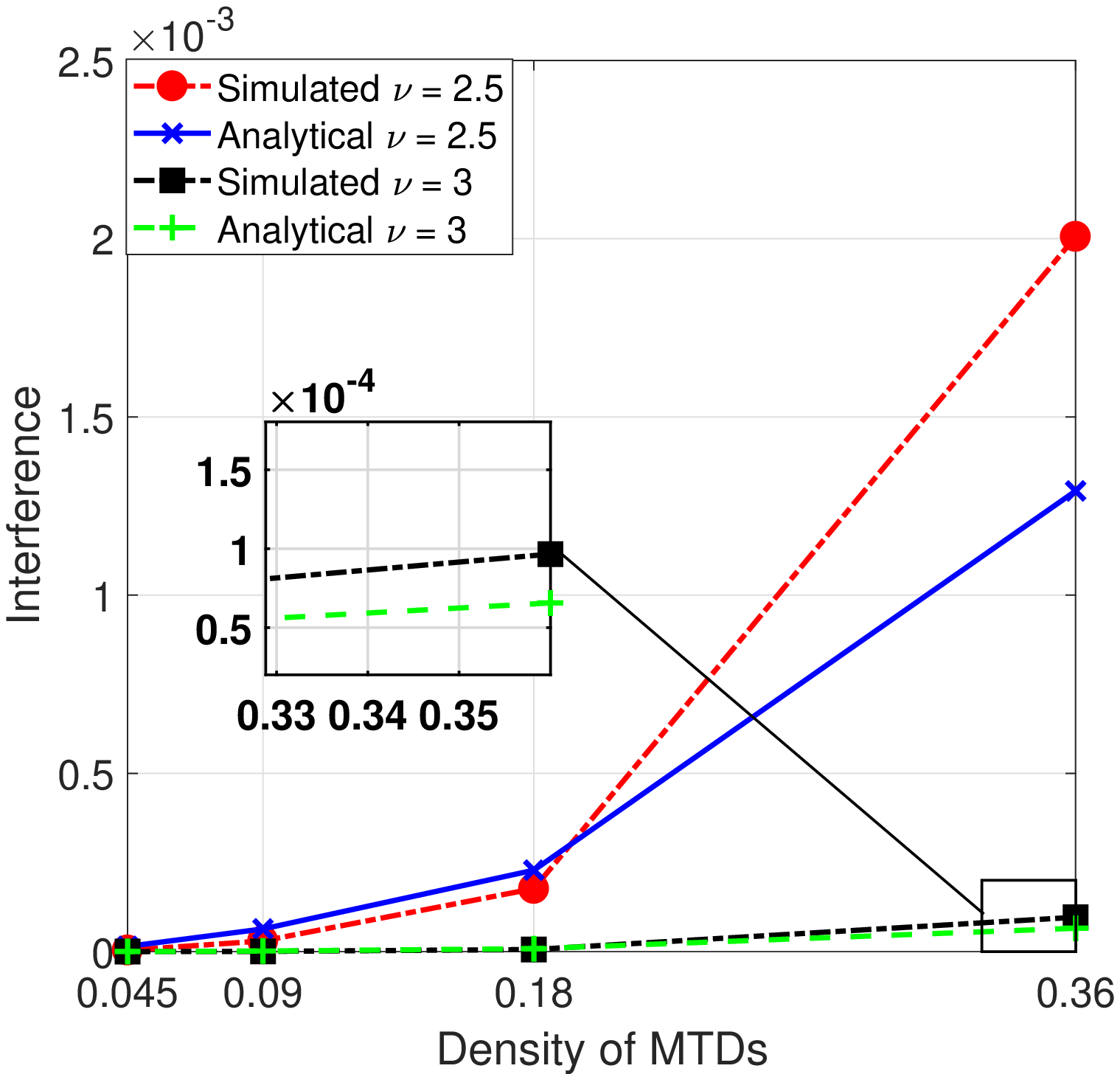}\par\vspace{-0.3cm}\caption{Simulated and analytical interference as a function of MTD density, $\lambda_m$, where $c=6$.}\label{fig:R1}
 \includegraphics[width=\linewidth]{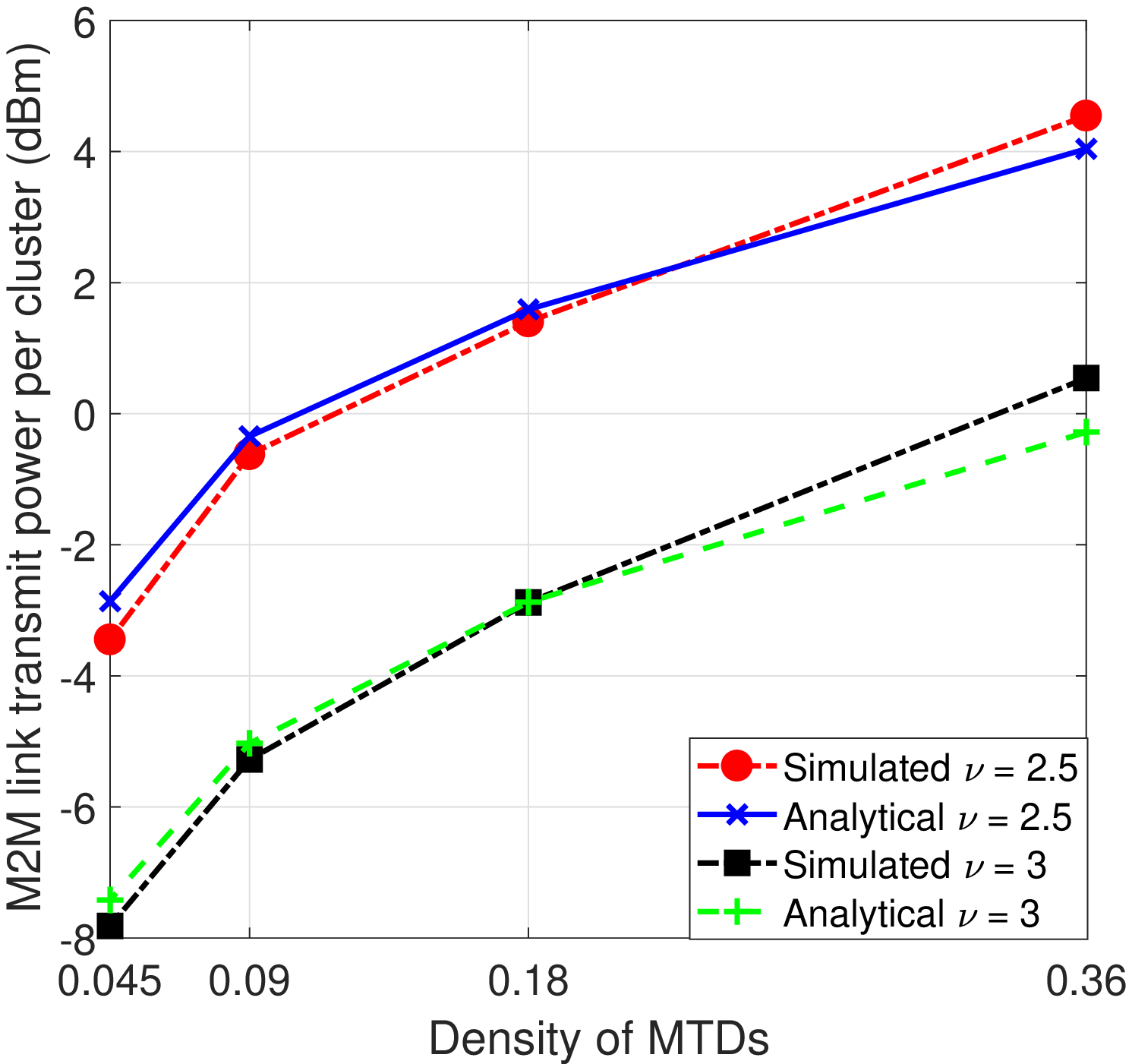}\par\vspace{-0.3cm}\caption{Simulated and analytical M2M link transmit power as a function of MTD density, $\lambda_m$, where $c=6$.}\label{fig:R1_1}
 \end{multicols}
 \vspace{-1cm}
\end{figure*}

Fig.~\ref{fig:R1} shows the inter-cluster interference for different network densities and path loss exponents.  From this figure, we can see that the analytical results derived using~(\ref{Interferense_simple}) closely match the corresponding simulation results.  Also, in Fig.~\ref{fig:R1}, we observe that, as the network density increases, this causes the inter-cluster interference to increase, due to the ball radius, $r_{N}$, decreasing as shown in Fig.~\ref{fig:R12}.  Furthermore, by increasing the path loss exponent from $\nu=2.5$ to $\nu=3$ the inter-cluster interference will be reduced due to the higher propagation losses.

Fig.~\ref{fig:R1_1} shows the M2M link transmit power per cluster for different network densities and path loss exponents.  From the figure, we can see that the analytical results derived using~(\ref{Utility-closed}) closely match the corresponding simulation results.  Also, in Fig.~\ref{fig:R1_1}, we observe that, as the network density increases, this causes the M2M link transmit power per cluster to increase, due to the increasing interference as shown in Fig.~\ref{fig:R1}.  Furthermore, by increasing the path loss exponent from $\nu=2.5$ to $\nu=3$ the M2M link transmit power per cluster will also be reduced due to the higher propagation losses.

In Fig.~\ref{fig:R3}, we show the probability density function (PDF) of MTD preference for forming certain cluster sizes within the ball, under different path loss exponents, $\nu$.  We observe that, as the path loss exponent increases, the maximum preference of MTDs forming a particular cluster size is the same.  As the path loss exponent increases, this causes bad channel modeling over the cellular and M2M links, and thus, causes MTDs to prefer to form smaller clusters.  We can see that, in our proposed evolutionary algorithm, MTDs prefer to form a cluster of size $5$ for $8.8$\% of the time for $\nu=2.5$, whereas MTDs prefer to form a cluster of size $5$ for $6.9$\% of the time, for $\nu=3$.  Overall, as the path loss exponent increases, MTDs prefer to form larger clusters.

\begin{figure}
  \centering
  \includegraphics[width=0.5\linewidth]{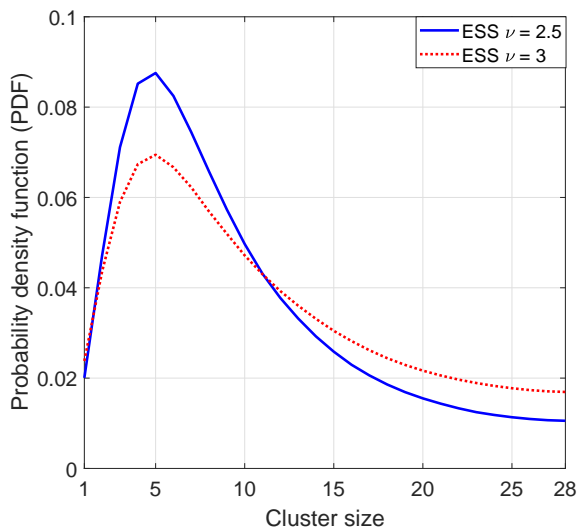}\par\vspace{-0.3cm}\caption{Percentage of MTDs forming a particular cluster size as a function of path loss exponents, $\nu$, where density of MTDs is $\lambda_m=0.09$ and $c=6$.\vspace{-0.5cm}}\label{fig:R3}
\end{figure}

Fig.~\ref{fig:R4} shows the probability density function (PDF) of MTD preference for forming certain cluster sizes within the ball, under different network densities, $\lambda_m$, and correlation constants, $c$.  We observe that, as the network density and correlation constant increase, the MTDs' preferences for forming particular cluster sizes does not change significantly.  This is due to the fact that the maximum radius of the ball, defined in (\ref{Max-M2M-Dist}), is a function of MTD density.  As illustrated in Fig.~\ref{fig:R12}, the radius of the ball changes according to the density, thus maintaining the number of MTDs within the ball consistent.  Furthermore, we observe that, in the proposed evolutionary algorithm, MTDs prefer to form a cluster of size $3$ for $13.1$\% of the time for $c=0.5$ and $\lambda_m=0.09$, whereas MTDs prefer to form a cluster of size $5$ for $8.2$\% of the time for $c=30$ and $\lambda_m=0.09$.  As the correlation constant increases within the cluster, MTDs prefer to form slightly larger clusters.  The results also show that, in the proposed evolutionary algorithm, MTDs prefer to form a cluster of size $3$ for $19.1$\% of the time for $\lambda_m=0.36$ and $c=6$, whereas MTDs prefer to form a cluster of size $5$ for $8$\% of the time for $\lambda_m=0.045$ and $c=6$.

\begin{figure*}[!t]
 \centering
 \begin{multicols}{2}
 \includegraphics[width=\linewidth]{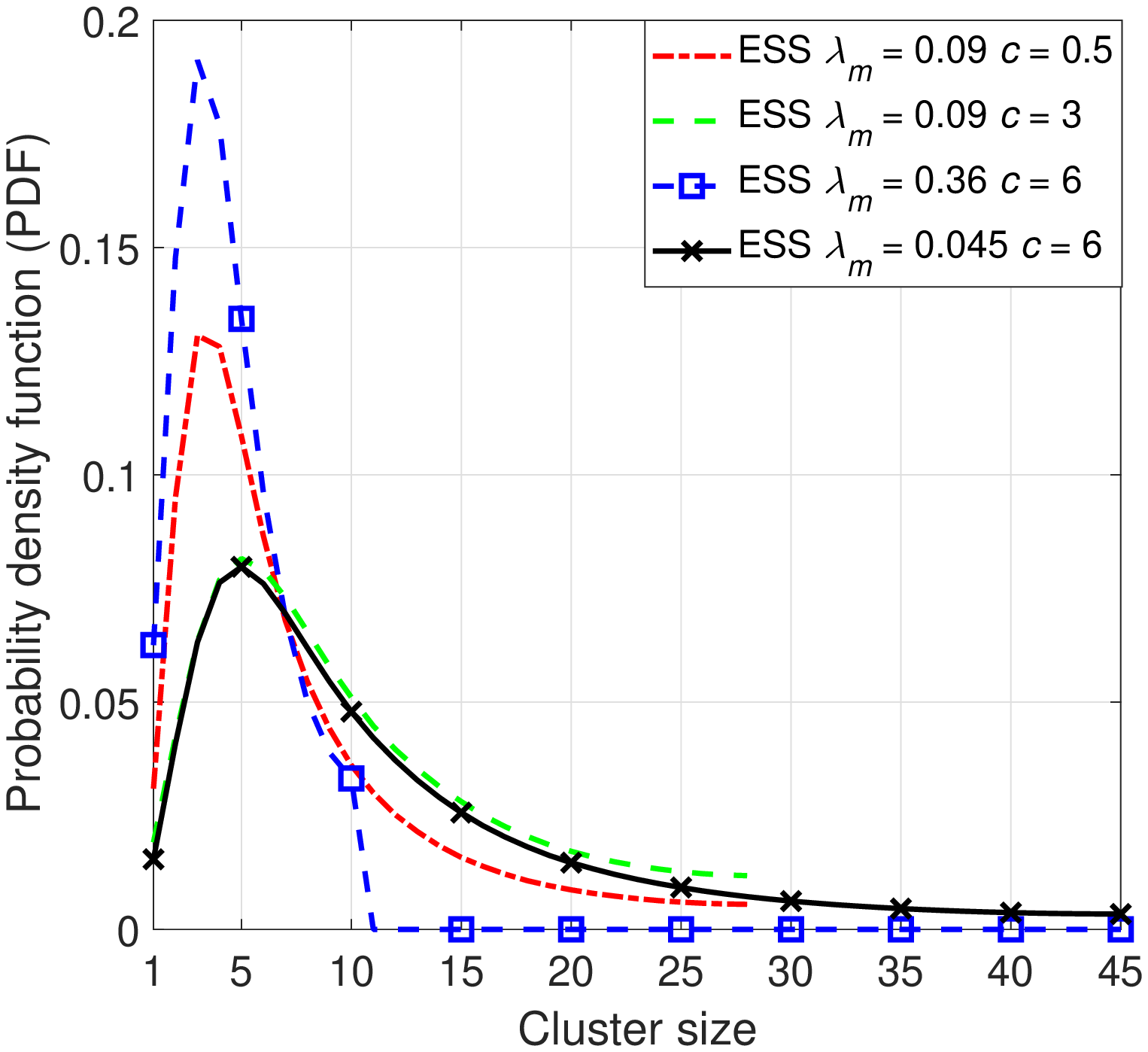}\par\vspace{-0.3cm}\caption{Percentage of MTDs forming a particular cluster size as a function of correlation constants, $c$, and MTD density, $\lambda_m$, where $\nu=2.5$.}\label{fig:R4}
 \includegraphics[width=\linewidth]{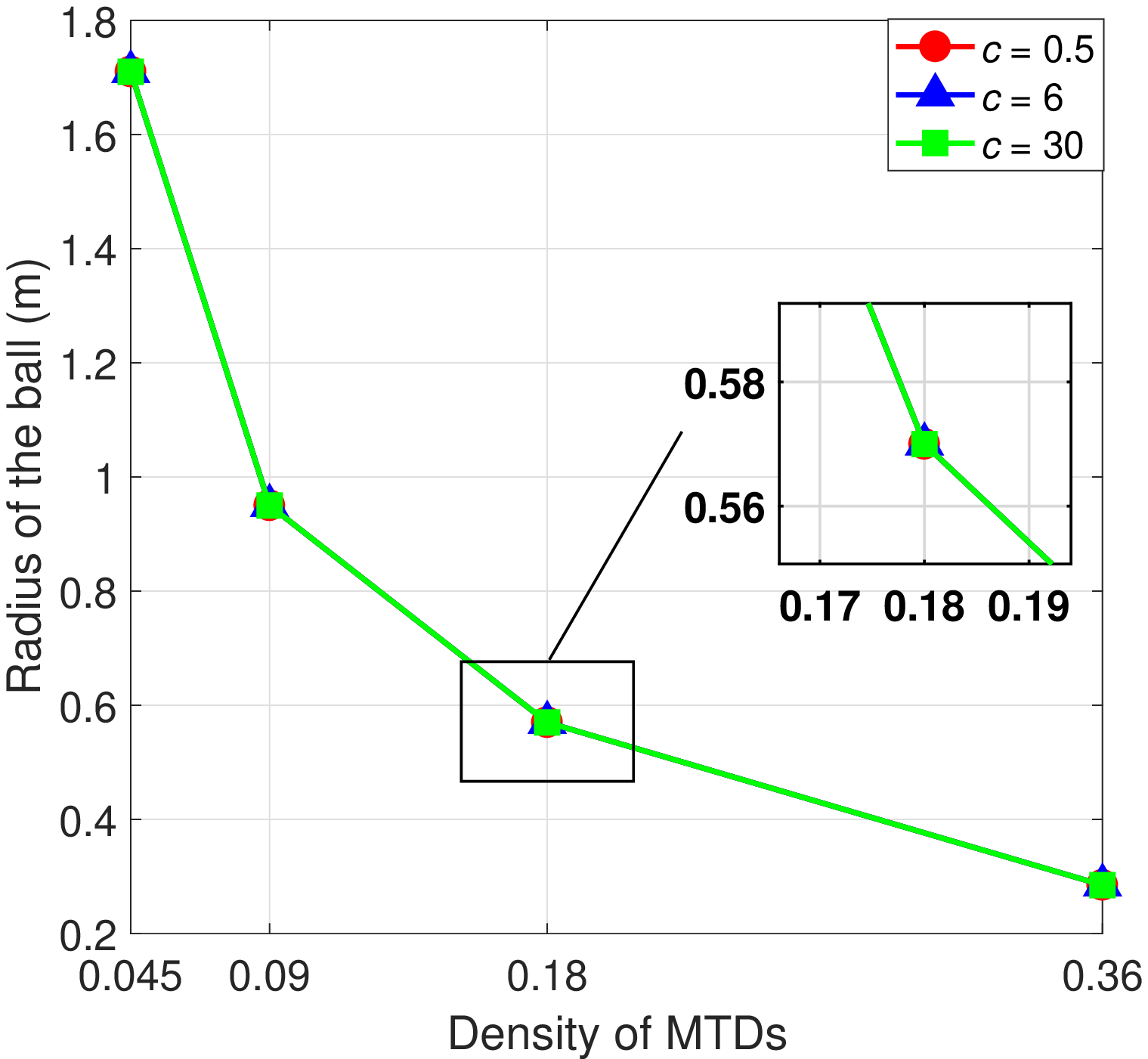}\par\vspace{-0.3cm}\caption{Radius of the ball as a function of MTD density, $\lambda_m$, and correlation constants, $c$, where $\nu=2.5$.}\label{fig:R12}
 \end{multicols}
 \vspace{-0.6cm}
\end{figure*}

Fig.~\ref{fig:R11} shows the average transmission power per MTD per cluster over time, under different correlation constants.  In this figure, we can see that any increase in the correlation constant will lead to a decrease in the average transmission power per MTD per cluster, due to the increase in data correlation within the cluster.  The proposed evolutionary algorithm converges to an ESS at $t=1000$ for all correlation constants.  Over time, the proposed evolutionary algorithm converges to an average transmit power per MTD of $11.5$~dBm for $c=0.5$, $11.18$~dBm for $c=6$, and $11.14$~dBm for $c=30$.

\begin{figure*}[!t]
 \centering
 \begin{multicols}{2}
 \includegraphics[width=\columnwidth]{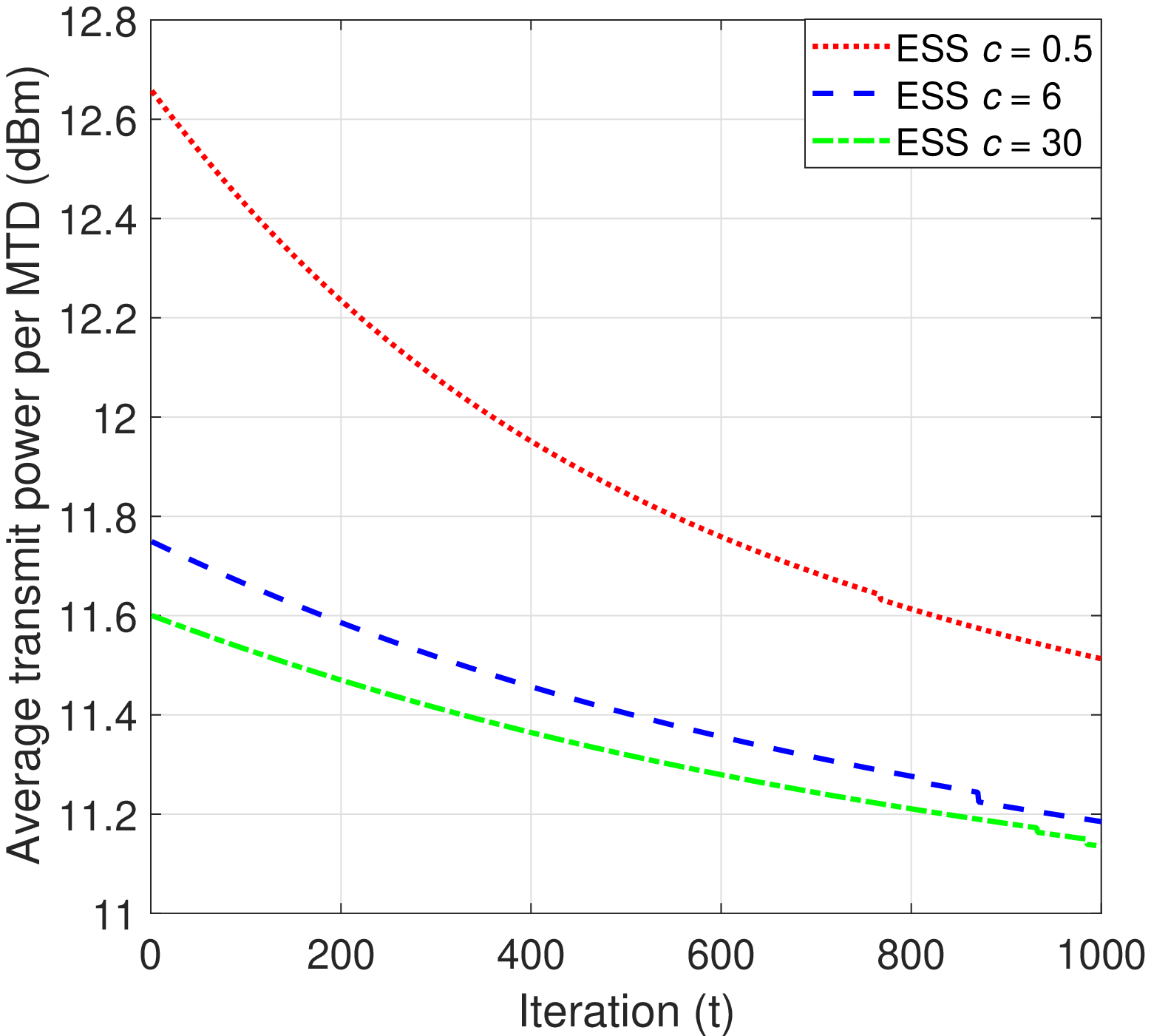}\par\vspace{-0.3cm}\caption{Dynamics of the average transmit power per MTD per cluster for different correlation constants, $c$, where density of MTDs is $\lambda_m=0.09$ and $\nu=2.5$.\vspace{-1cm}}\label{fig:R11}
  \includegraphics[width=\linewidth]{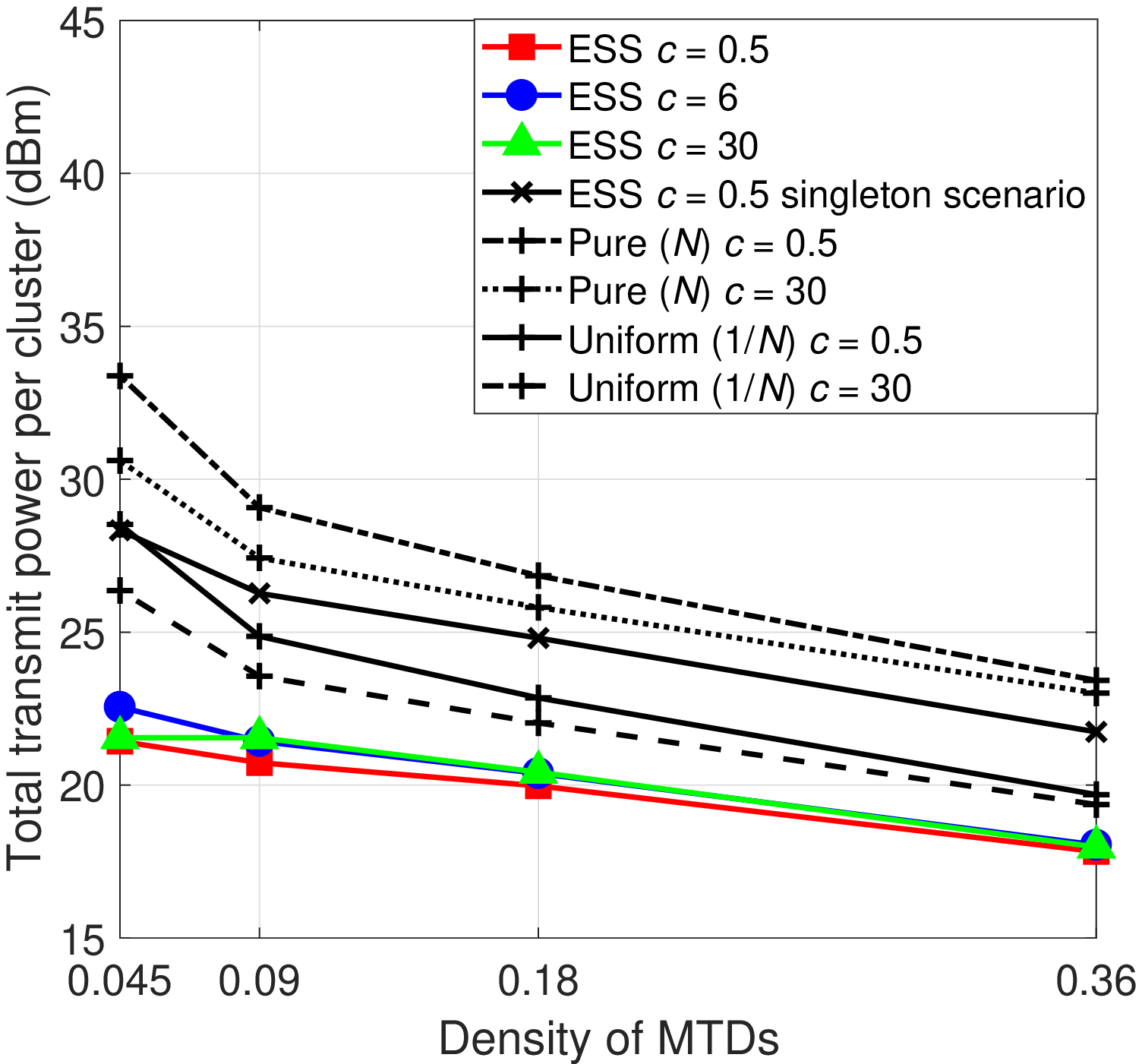}\par\vspace{-0.3cm}\caption{Total transmit power per cluster as a function of MTD density, $\lambda_m$, where $\nu=2.5$.}\label{fig:R2}
 \end{multicols}
\end{figure*}

Moreover, in Fig.~\ref{fig:R2}, we show the total transmit power per cluster at convergence, under different network densities and correlation constants.  We observe that, as the network density and correlation constant increase, the cluster size and data correlation will increase.  As cluster size and data correlation increase, the total transmit power per cluster will also increase.  We compare the proposed evolutionary-based correlation aware clustering algorithm for different correlation constants to the singleton scenario, as well as the two benchmarks.  The singleton scenario, assumes all MTDs within the ball transmit their own data via the cellular link to the BS, thus no clustering within the ball.  On average, the transmit power per cluster is reduced of around $20.7$\% for $c=0.5$, $18.4$\% for $c=6$, and $18.1$\% for $c=30$, compared to the singleton scenario.  The proposed evolutionary algorithm decreases the transmission power per cluster of around $28.5$\% for $c=0.5$, $26.4$\% for $c=6$, and $26.2$\% for $c=30$, compared to the pure cluster type benchmark for $c=0.5$, whereas transmission power per cluster decreases of around $24.9$\% for $c=0.5$, $22.7$\% for $c=6$, and $22.4$\% for $c=30$, compared to the pure cluster type benchmark for $c=30$.  On the other hand, the proposed evolutionary algorithm also decreases transmission power per cluster of around $15.9$\% for $c=0.5$, $13.5$\% for $c=6$, and $13.2$\% for $c=30$, compared to the uniform cluster type benchmark for $c=0.5$, whereas transmission power per cluster decreases around $12$\% for $c=0.5$, $9.5$\% for $c=6$, and about $9.2$\% for $c=30$, compared to the uniform cluster type benchmark for $c=30$.  On the average, the proposed evolutionary algorithm minimizes transmit power by $23.4$\% and $9.6$\% across all correlation constants and network densities, compared to the pure cluster type benchmark and the uniform cluster type benchmark respectively.

\begin{figure*}[!t]
 \centering
 \begin{multicols}{2}
 \includegraphics[width=\linewidth]{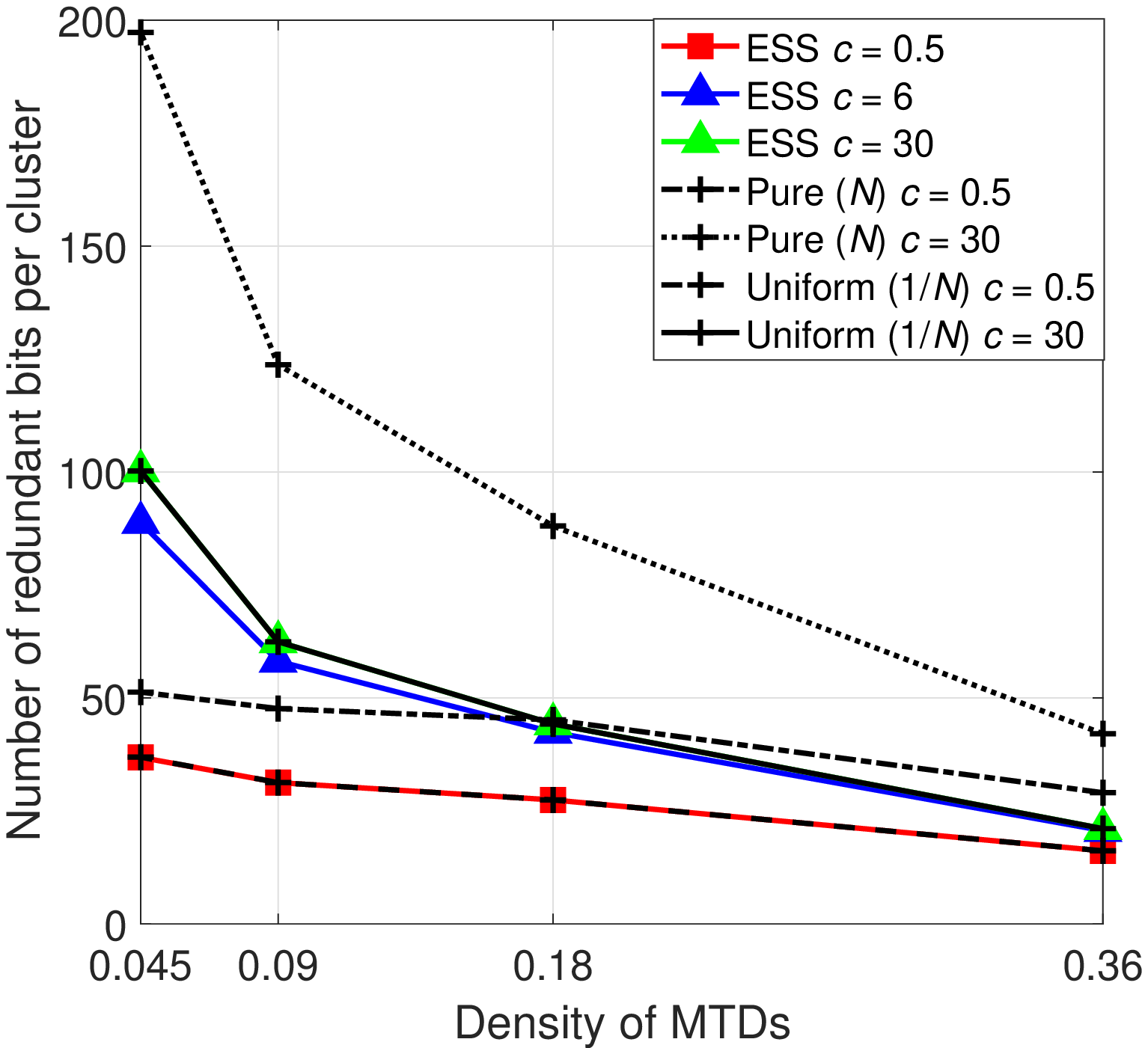}\par\vspace{-0.3cm}\caption{Number of redundant bits per cluster as a function of MTD density, $\lambda_m$, where $\nu=2.5$.}\label{fig:R9}
 \includegraphics[width=\linewidth]{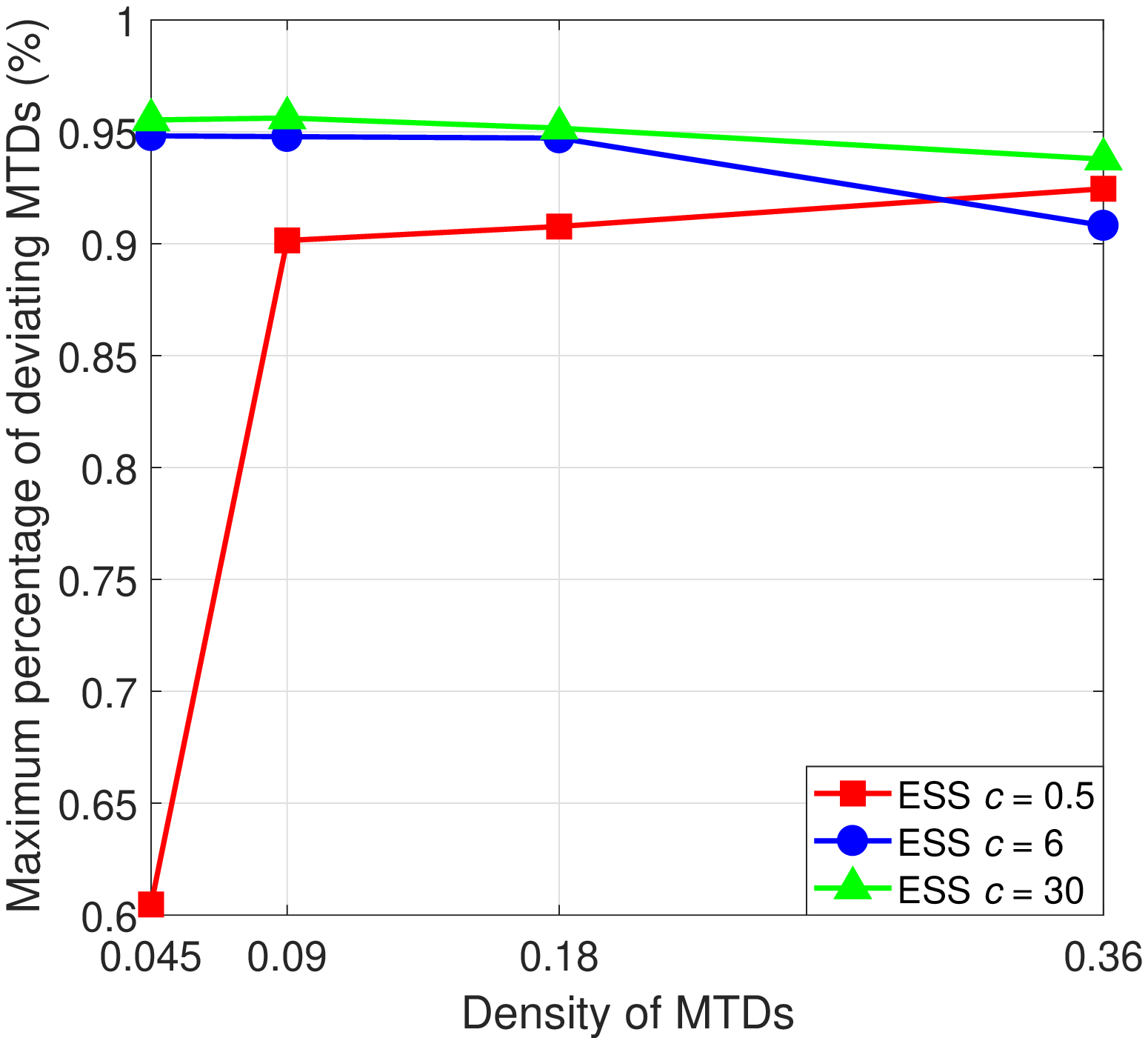}\par\vspace{-0.3cm}\caption{Maximum percentage of deviating MTDs from ESS as a function of MTD density, $\lambda_m$, where $\nu=2.5$.}\label{fig:R7}
 \end{multicols}
 \vspace{-1cm}
\end{figure*}

Fig.~\ref{fig:R9} shows the average number of redundant bits per cluster, for different network densities and correlation constants.  The average number of redundant bits per cluster, is the amount of bits that could be removed by the cluster head, before transmitting the cluster data to the BS.  Thus, decreasing network density and increasing the correlation constant, as in Fig.~\ref{fig:R9}, leads to an increase in data correlation, cluster size, and the number of redundant bits.  On the average, in Fig.~\ref{fig:R9}, the proposed evolutionary-based correlation aware clustering algorithm increases the number of redundant bits by more than double, that is, $43.6$~\%, when increasing the correlation constant from $c=0.5$ to $c=30$. On the average, the proposed evolutionary correlation-aware clustering algorithm percentage comparisons to the pure cluster type benchmark and uniform cluster type benchmark, are outlined in Table~\ref{tab:sim2}.  As shown in Fig.~\ref{fig:R9} and Table~\ref{tab:sim2}, the pure benchmark with correlation constant $c=30$ maximizes the number of redundant bits per cluster across all network densities, as this benchmark considers all MTDs preferring to form the maximum possible size cluster within the ball.  Thus, considering the results of Figs.~\ref{fig:R11} and \ref{fig:R2}, both benchmarks yielded a larger transmit power and number of redundant bits per cluster, compared to the proposed distributed algorithm.  However, both benchmarks require a centralized approach in order to achieve these results, and those benchmarks are not robust to stochastic changes in the M2M network environment.

\begin{table} [!t]
\vspace{-1cm}
\begin{center}
\caption{Simulation analysis for Fig.~\ref{fig:R9}\vspace{-0.3cm}}\label{tab:sim2}
\begin{tabular}{|c|c|c|c|c|c|}
\hline
\textbf{ESS Algorithm} & \textbf{Pure} $c=0.5$ & \textbf{Pure} $c=30$ & \textbf{Uniform} $c=0.5$ & \textbf{Uniform} $c=30$ \\ \hline
$c=0.5$ &  $-36.5$\% & $-71.6$\% & $-0.04$\% & $-43.6$\%  \\ \hline
$c=6$   &  $15$\%    & $-52.7$\% & $77.2$\%  & $-6.1$\%  \\ \hline
$c=30$  &  $24.3$\%  & $-49.6$\% & $90.8$\%  & $0$\%  \\
\hline
\end{tabular}
\end{center}
\vspace{-0.3cm}
\end{table}

In Fig.~\ref{fig:R7}, we show the maximum percentage of deviating MTDs at an ESS for our proposed evolutionary game, under different network densities, $\lambda_m$, and correlation constants, $c$.  As the density and correlation constant increase (that is, as the network becomes denser and data in the cluster becomes more correlated) the maximum percentage of deviating MTDs also increases.  For a correlation constant, $c=0.5$, when the network density is $\lambda_m=0.045$ the maximum portion of MTDs that will deviate due to additional deployment of MTDs or MTD loss of battery, is about $0.6$\%.  However, as the density of the network increases to $\lambda_m=0.36$, the maximum percentage of MTDs that will deviate is about $0.92$\%.  Therefore, at the ESS, as the network becomes denser and more correlated, MTDs have more options to deviate compared to when the network density is sparse.  Additionally, when the network becomes more correlated, the maximum percentage of MTDs that may deviate saturates when the network density increases.

Fig.~\ref{fig:R10} shows the average transmit power per cluster as a function of cluster size, for different network densities.  We observe an increasing relationship between cluster size and transmit power per cluster.  Depending on network density, the potential cluster size changes, which is due to the changing of the ball radius with respect to the network density.  As the network density decreases, the ball radius will also decrease, leading to an increase in the potential number of MTDs per cluster.  On average, in Fig.~\ref{fig:R10}, the proposed evolutionary-based correlation aware clustering algorithm has a maximum transmit power per cluster of $31.2$~dBm for a maximum cluster size of $45$ for $\lambda_m=0.045$, whereas for $\lambda_m=0.18$ the proposed game has a maximum transmit power per cluster of $25.9$~dBm for a maximum cluster size of $20$.
%We observe that the relationship between cluster size and transmit power per cluster is concave thus yielding an optimal transmit power per cluster for a given cluster size.

\begin{figure*}[!t]
 \centering
 \begin{multicols}{2}
 \includegraphics[width=\linewidth]{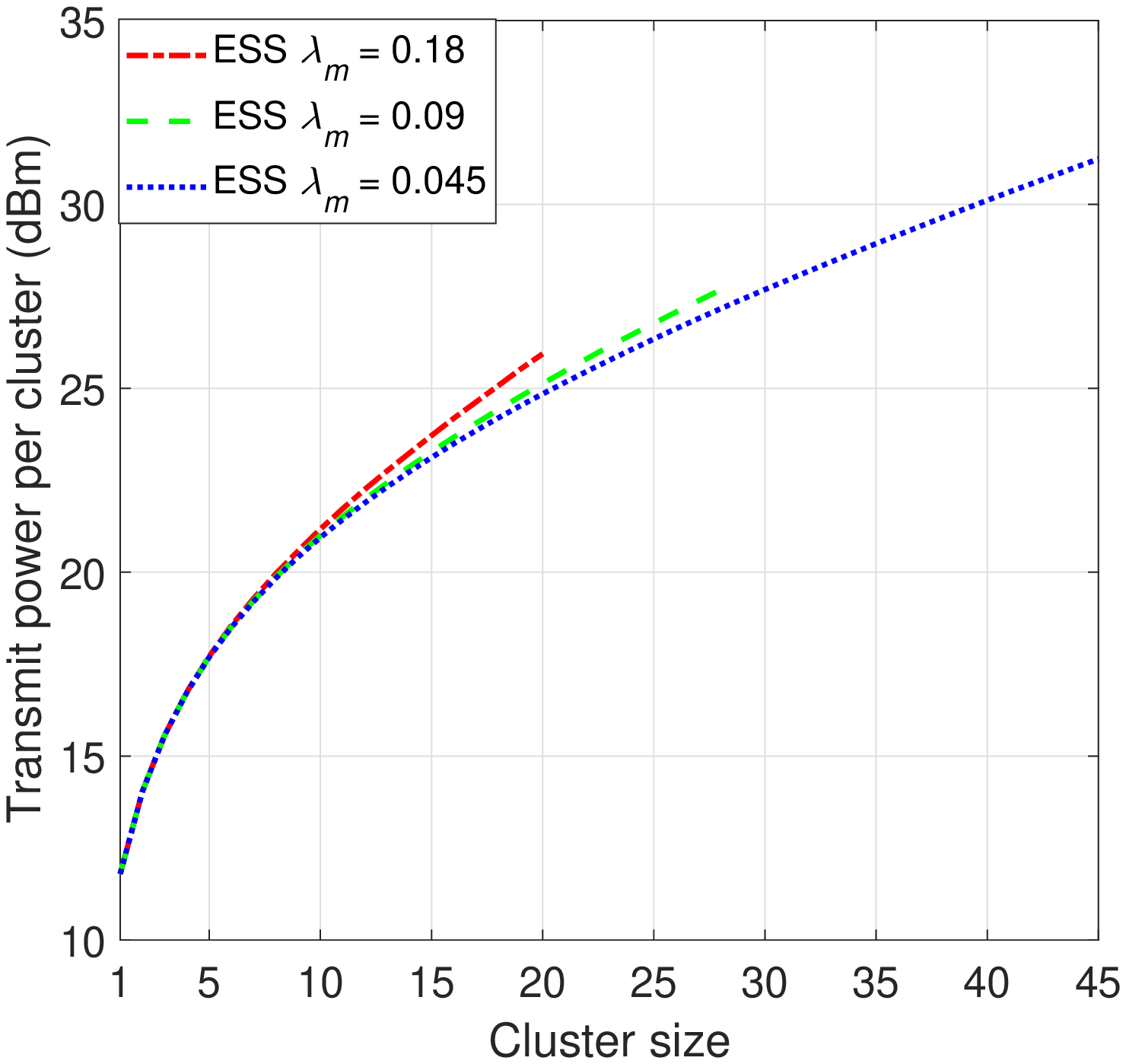}\par\vspace{-0.3cm}\caption{Total transmit power per cluster as a function of potential cluster sizes, MTD density, $\lambda_m$, $\nu=2.5$, and $c=6$.}\label{fig:R10}
 \includegraphics[width=\linewidth]{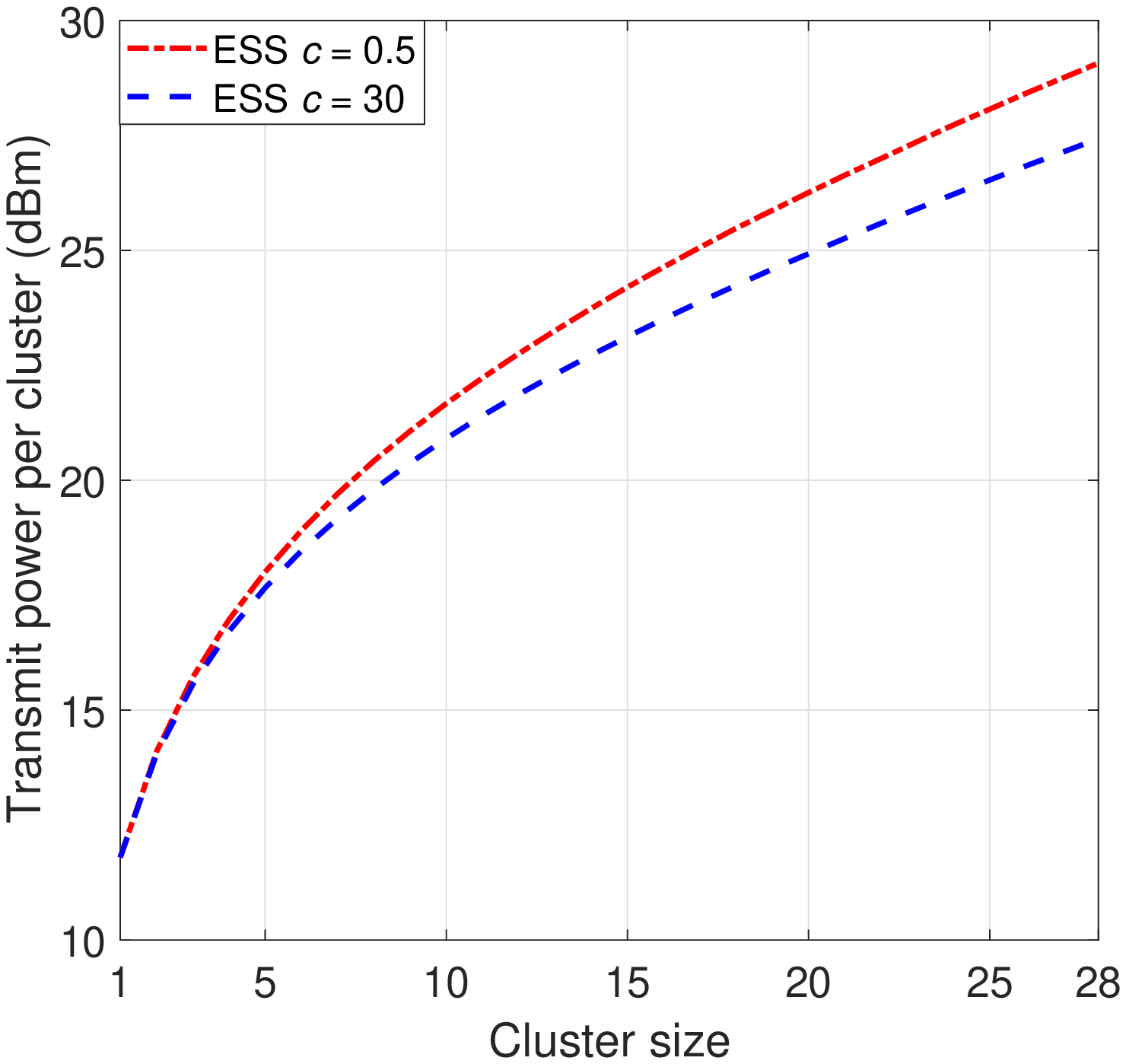}\par\vspace{-0.3cm}\caption{Total transmit power per cluster as a function of potential cluster sizes, correlation constant, $c$, $\nu=2.5$, and $\lambda_m=0.09$.}\label{fig:R13}
 \end{multicols}
 \vspace{-1cm}
\end{figure*}

Fig.~\ref{fig:R13} shows the average transmit power per cluster as a function of cluster size, for different correlation constants.  We observe an increasing relationship between cluster size and transmit power per cluster size, that is similar to the one shown in Fig.~\ref{fig:R10}.  We observe that, as the correlation constant increases, the potential cluster sizes do not change, which is again due to the ball radius adapting to the changes in the network.  On average, in Fig.~\ref{fig:R13}, the proposed evolutionary algorithm has a maximum transmit power per cluster of $29.1$~dBm for a maximum cluster size of $28$ for $c=0.5$, whereas for $c=30$ the proposed game has a maximum transmit power per cluster of $27.4$~dBm for a maximum cluster size of $28$.

\vspace{-0.1cm}
\section{Conclusion}\label{Sec:Con}
In this paper, we have proposed a novel distributed correlation-aware clustering scheme for reducing the number of redundant bits being set to the BS, as well as reducing transmission power for each MTD in a massive and locally finite M2M network.  In the proposed model, MTDs self-organize in a fully distributed manner to form cluster, based on data correlation and potential transmission power savings.  We have modeled the problem using evolutionary game theory and stochastic geometry.  Stochastic geometry has been used to accurately model and characterize the distance distributions between MTDs, in order to derive a closed-form expression for inter-cluster interference.  Additionally, stochasic geometry has also been used to derive average power consumption per cluster, as a function of MTD location, cluster size, and inter-cluster interference.  Based on the evolutionary game, we have proposed a distributed clustering algorithm, in which MTDs autonomously seek to minimize the average MTD transmission power per cluster.  We have shown that the proposed distributed algorithm converges to a stable state, that is, an ESS which is robust to a small portion of MTDs deviating from the stable cluster formation.  Moreover, we have also derived a maximum portion of MTDs that can deviate from the ESS, while still maintaining a stable cluster formation.  Simulation results show that, the average, the proposed evolutionary algorithm decreases transmission power per cluster, and increases the number of redundant bits that can be eliminated in a given cluster.

\appendix
\section{}
%%%%%%%%%%%%%%%%%%%%%%%%%%%%%%%%%%%%%%%
\subsection{Proof of Theorem~\ref{Theor_utility}}\label{app:A}
The utility function $u_{\mathcal{K}_{i,j}}(t)$ of typical cluster $\mathcal{K}_{i,j}$ choosing type $j$, is expressed as follows:
\begin{align}\label{Utility-T}
u_{\mathcal{K}_{i,j}}(t)=-\frac{1}{j}\sum_{k=1}^{j-1} \int_{0}^{r_j} \left(\frac{\mathbb{E}(I_j)}{g_{ki}}\right)\left(2^{\frac{H_k}{TB}}-1\right) f_{\mathcal{D}_i}^{(k)}(r)dr-\frac{1}{j}\left(\frac{BN_0}{g_{i}}\right)\left(2^{\frac{H_{\mathcal{K}_{i,j}}}{TB}}-1\right).
\end{align}

In (\ref{Utility-T}), there is a tradeoff between cluster size and transmission power for MTDs.  Thus, as the cluster size increases, the transmit power per cluster also increases, due to the increasing number of MTDs.  The proposed utility function in (\ref{Utility-T}) captures the average transmission power for all MTDs across the M2M and cellular links within the cluster, sending all the gathered data of the cluster.  From (\ref{jointEntropy}), we can rewrite (\ref{Utility-T}) as follows:
\begin{align}\label{Utility-K}
 u_{\mathcal{K}_{i,j}}(t)=&-\frac{1}{j}\sum_{k=1}^{j-1} \int_{0}^{r_j} \left(\mathbb{E}(I_j)
 \left(\frac{4\pi r}{\mu}\right)^{\nu}
 \right)
 \left(2^{\frac{H_k}{TB}}-1\right) f_{\mathcal{D}_i}^{(k)}(r)dr- \nonumber \\
 &\frac{1}{j}\left(\frac{BN_0}{g_{i}}\right)\left(2^{\frac{H_m+H_m(j-1)\big(1-\frac{\alpha}{\frac{r_j}{c}+1}\big)}{TB}}-1\right).
\end{align}

The utility function in (\ref{Utility-K}) for a given cluster $\mathcal{K}_{i,j}$, is a function of the cluster radius $r_j$ and the cluster type $j$, where $j$ is equal to the cluster size, $j=K_{i,j}$.  Furthermore, from (\ref{K_Distance}) and (\ref{Interferense_simple}), we can rewrite the utility function (\ref{Utility-K}) of cluster $\mathcal{K}_{i,j}$ choosing type $j$ as a closed-form expression, as follows:
\begin{align}\label{Utility-K-closed_1}
u_{\mathcal{K}_{i,j}}(t)=&-
 \frac{1}{j}\left(\frac{4\pi}{\mu}\right)^{\nu}
 \mathbb{E}(I_j) \left(2^{\frac{H_m}{TB}}-1\right)\sum_{k=1}^{j-1} \frac{(j-1)!}{(k-1)! (j-1-k)!}\int_{0}^{r_j}
\frac{2r^{\nu+1}}{r_j^2}\left(\frac{r^2}{r_j^2}\right)^{k-1} \times \nonumber \\
& \left(1-{\frac{r^2}{r_j^2}}\right)^{j-1-k}dr-\frac{1}{j}\left(\frac{BN_0}{g_{i}}\right)\left(2^{\frac{H_m+H_m(j-1)\left(1-\frac{\alpha}{\frac{r_j}{c}+1}\right)}{TB}}-1\right).
\end{align}

To further simplify (\ref{Utility-K-closed_1}), if $\frac{r^2}{r_j^2}=a$, then $dr=\frac{r_j da}{2 a^{\frac{1}{2}}}$ and $r=r_j a^{\frac{1}{2}}$. Thus:
\begin{align}
\int_{0}^{r_j} \frac{2r^{\nu+1}}{r_j^2}\left(\frac{r^2}{r_j^2}\right)^{k-1} \left(1-{\frac{r^2}{r_j^2}}\right)^{j-1-k}dr & =\int_{0}^{1}
2 a r_j^{\nu-1} a^{\frac{\nu-1}{2}} a^{k-1} \left(1-{a}\right)^{j-1-k} \frac{r_j da}{2 a^{\frac{1}{2}}} \\ \nonumber
\therefore &= r_j^\nu \int_{0}^{1} a^{k+\frac{\nu-2}{2}} \left(1-{a}\right)^{j-1-k} da. \nonumber
\end{align}
Since $\int_{0}^{r_j} \frac{2r^{\nu+1}}{r_j^2}\left(\frac{r^2}{r_j^2}\right)^{k-1}\left(1-{\frac{r^2}{r_j^2}}\right)^{j-1-k}dr =\frac{r_j^\nu\Gamma(j-k)\Gamma(k+\frac{v}{2})}{\Gamma(j+\frac{v}{2})}$, where $\Gamma(\cdot)$ is the Gamma function, and $j$ and $k$ are integers, such that $j>k$, and $2k+\nu>0$ where $\nu>2$ and $\nu$ is not an integer.  Thus, (\ref{Utility-K-closed_1}) can be rewritten as:
\begin{align}\label{Utility-K-closed_2}
u_{\mathcal{K}_{i,j}}(t)&=-\frac{1}{j}\left(\frac{4\pi}{\mu}\right)^{\nu}\mathbb{E}(I_j)\left(2^{\frac{H_m}{TB}}-1\right)\sum_{k=1}^{j-1} \frac{(j-1)!}{(k-1)! (j-1-k)!}  \frac{r_j^\nu\Gamma(j-k)\Gamma(k+\frac{\nu}{2})}{\Gamma(j+\frac{\nu}{2})}-\nonumber \\
&\frac{1}{j}\left(\frac{BN_0}{g_{i}}\right)\left(2^{\frac{H_m+H_m(j-1)\left(1-\frac{\alpha}{\frac{r_j}{c}+1}\right)}{TB}}-1\right).
\end{align}

Since path loss, $\nu$, is not always an integer we cannot express (\ref{Utility-K-closed_2}) as a function of factorials, thus we can further simplify (\ref{Utility-K-closed_2}) in terms of the $\Gamma$ function, given that $\Gamma(n)=(n-1)!$, $\Gamma(n+1)=n\Gamma(n)$, and $(n-1-m)!=\Gamma(n-m)$, where $n$ and $m$ are positive integers.  (\ref{Utility-K-closed_2}) can be rewritten as:
\begin{align}\label{Utility-K-closed_3}
u_{\mathcal{K}_{i,j}}(t)=&-\frac{1}{j}\left(\frac{4\pi r_j}{\mu}\right)^{\nu}\mathbb{E}(I_j)\left(2^{\frac{H_m}{TB}}-1\right)\sum_{k=1}^{j-1}\frac{\Gamma(j)\Gamma(k+\frac{\nu}{2})}{\Gamma(k)\Gamma(j+\frac{\nu}{2})}- \nonumber \\
&\frac{1}{j}\left(\frac{BN_0}{g_{i}}\right)\left(2^{\frac{H_m+H_m(j-1)\left(1-\frac{\alpha}{\frac{r_j}{c}+1}\right)}{TB}}-1\right).
\end{align}

Therefore, by deriving the closed-form expression of the summation in (\ref{Utility-K-closed_3}), $\sum_{k=1}^{j-1} \frac{\Gamma(j)\Gamma(k+\frac{\nu}{2})}{\Gamma(k)\Gamma(j+\frac{\nu}{2})}=\frac{2\left(j-1\right)}{2+\nu}$, we can find the closed-form expression of (\ref{Utility-T}), as defined in (\ref{Utility-closed}).

%%%%%%%%%%%%%%%%%%%%%%%%%%%%%%%%%%%%%%%%%%%%%%%%%%%%%%
\subsection{Proof of Theorem~\ref{Theor_band_epsilon}}\label{app:B}
Since the set of cluster types in game $\mathcal{G}_E$ is finite,  the replicator dynamic will converge to the steady state which is the (symmetric) mixed strategy Nash equilibrium of $\mathcal{G}_E$~\cite{shoham2008multiagent}. Let $\boldsymbol{x}^*$ be the converged population state of the replicator dynamic from  proposed Algorithm~\ref{algorithm-E}. Following~(\ref{xdot}), at the converged population state of proposed Algorithm~\ref{algorithm-E}, we have:
\begin{equation}\label{Jacobi:1}
\frac{\partial \dot{x}_j^*(t)}{\partial x_m^*(t)}  =
  \begin{cases}
  \bar{u}_{\mathcal{K}_{i,j}}(\boldsymbol{x}^*,t)-U_{\mathcal{K}_i}(\boldsymbol{x}^*,t)+
  \left(
\frac{\partial\bar{u}_{\mathcal{K}_{i,j}}(\boldsymbol{x}^*,t)}{\partial x_j^*(t)}-
\frac{\partial U_{\mathcal{K}_i}(\boldsymbol{x}^*,t)}{\partial x_j^*(t)}
  \right)  x_j^*, & \text{for } j=m \\
     \left(
\frac{\partial\bar{u}_{\mathcal{K}_{i,j}}(\boldsymbol{x}^*,t)}{\partial x_m^*(t)}-
\frac{\partial U_{\mathcal{K}_i}(\boldsymbol{x}^*,t)}{\partial x_m^*(t)}
  \right)  x_m^*
   , & \text{for } j\neq m
  \end{cases}
\end{equation}

Based on (\ref{xdot}), at the convergence, for any $0 < x_j^*(t)$, $\bar{u}_{\mathcal{K}_{i,j}}(\boldsymbol{x}^*,t)-U_{\mathcal{K}_i}(\boldsymbol{x}^*,t)=0$. Moreover,
$\frac{\partial\bar{u}_{\mathcal{K}_{i,j}}(\boldsymbol{x}^*,t)}{\partial x_j^*(t)}=u_{\mathcal{K}_{i,j}}$,
$\frac{\partial U_{\mathcal{K}_i}(\boldsymbol{x}^*,t)}{\partial x_j^*(t)}=
\bar{u}_{\mathcal{K}_{i,j}}(\boldsymbol{x},t)+u_{\mathcal{K}_{i,j}}(\boldsymbol{x},t)x_j^*+
\sum_{n \in \mathcal{S} \setminus \{j\} }
u_{\mathcal{K}_{i,nj}}(\boldsymbol{x},t)x_n^*
$,
$\frac{\partial\bar{u}_{\mathcal{K}_{i,j}}(\boldsymbol{x}^*,t)}{\partial x_m^*(t)}=u_{\mathcal{K}_{i,jm}}$, and
$\frac{\partial U_{\mathcal{K}_i}(\boldsymbol{x}^*,t)}{\partial x_m^*(t)}=
\bar{u}_{\mathcal{K}_{i,m}}(\boldsymbol{x},t)+u_{\mathcal{K}_{i,jm}}(\boldsymbol{x},t)x_m^*+
\sum_{n \in \mathcal{S} \setminus \{m\} }
u_{\mathcal{K}_{i,jn}}(\boldsymbol{x},t)x_n^*
$
. Therefore, (\ref{Jacobi:1}) can be rewritten as follows:
\begin{equation}\label{Jacobi:2}
\frac{\partial \dot{x}_j^*(t)}{\partial x_m^*(t)}  =
  \big(u_{\mathcal{K}_{i,jm}}(\boldsymbol{x}^*,t)-2\bar{u}_{\mathcal{K}_{i,m}}(\boldsymbol{x}^*,t)\big)x_m^*.
\end{equation}

Since $\boldsymbol{x}^*$ is mixed strategy Nash equilibrium, i.e.,
$u_{\mathcal{K}_{i,m}}(\boldsymbol{x}^*,t) \leq \sum_{j^\prime\in\mathcal{S}}u_{\mathcal{K}_{i,mj^\prime}}(t)x_{j^\prime}^*$, and based on (\ref{fitness}) $\bar{u}_{\mathcal{K}_{i,m}}(\boldsymbol{x}^*,t)=\sum_{j^\prime\in\mathcal{S}}u_{\mathcal{K}_{i,mj^\prime}}(t)x_{j^\prime}^*$, and thus, $\frac{\partial \dot{x}_j^*(t)}{\partial x_m^*(t)}<0$. Consequently, the convergence state of the proposed algorithm, and specifically the steady state of the replicator dynamic in (\ref{xdot}) is asymptotically stable. Since steady state of the replicator dynamic is an asymptotically stable state in our problem, then the proposed Algorithm~\ref{algorithm-E} will converge to ESS~\cite{shoham2008multiagent}. This completes the proof.

We assume $\boldsymbol{x}^{\epsilon}$ is a preference profile vector, where $\epsilon$ potion of MTDs deviate while the remaining $1-\epsilon$ potion of MTDs choose the converged population state $\boldsymbol{x}^{*}$. Let $ x_j^{\epsilon}=x_{j}^{*}+\epsilon_{j} \frac{\partial \dot{x}_{j}}{\partial x_{j}}(\boldsymbol{x}^*)$. Since, $0\leq x_{j}^{\epsilon}$, we have:
\begin{align}
& |\epsilon_j|\leq 1, \forall j\in \mathcal{S},\\
&\sum_{j\in \mathcal{S}} \epsilon_j=0,\\
&\therefore|\epsilon_{j}|\leq
  \frac{1}{2\bar{u}_{\mathcal{K}_{i,j}}(\boldsymbol{x}^*,t)-u_{\mathcal{K}_{i,j}}(\boldsymbol{x}^*,t)} \forall j\in \mathcal{S}.
\end{align}

To hold the above inequalities, the maximum value of $\epsilon$ is given by ($\ref{Band_epsilon}$).

%%%%%%%%%%%%%%%%%%%%%
\vspace{-0.05cm}
\bibliographystyle{IEEEtran}
\def\baselinestretch{0.9}
% Generated by IEEEtran.bst, version: 1.14 (2015/08/26)

% \bibliography{references}
\end{document}